\newtheorem{example}{Example}
\newtheorem{lemma}{Lemma}
\newtheorem{definition}{Definition}[section]
\newtheorem{proof}{Proof}
\newcommand{\mat}[1]{\bm{#1}}
\newcommand{\ten}[1]{\bm{\mathcal{#1}}}
\newcommand{\kpr}[1]{\textsuperscript{\textcircled{#1}}}
\begin{document}

\begin{frontmatter}

\title{A Tensor Network Kalman filter with an application in recursive MIMO Volterra system identification}


\author[HKU]{Kim Batselier}\ead{kim.batselier@eee.hku.hk},    
\author[HKU]{Zhongming Chen}\ead{zmchen@eee.hku.hk},    
\author[HKU]{Ngai Wong}\ead{nwong@eee.hku.hk},               

\address[HKU]{The Department of Electrical and Electronic Engineering, The University of Hong Kong}  

\begin{keyword}                           
Volterra series; tensors; Kalman filters; identification methods; system identification; time-varying systems
\end{keyword}                             

\begin{abstract}                          
This article introduces a Tensor Network Kalman filter, which can estimate state vectors that are exponentially large without ever having to explicitly construct them. The Tensor Network Kalman filter also easily accommodates the case where several different state vectors need to be estimated simultaneously. The key lies in rewriting the standard Kalman equations as tensor equations and then implementing them using Tensor Networks, which effectively transforms the exponential storage cost and computational complexity into a linear one. We showcase the power of the proposed framework through an application in recursive nonlinear system identification of high-order discrete-time multiple-input multiple-output (MIMO) Volterra systems. The identification problem is transformed into a linear state estimation problem wherein the state vector contains all Volterra kernel coefficients and is estimated using the Tensor Network Kalman filter. The accuracy and robustness of the scheme are demonstrated via numerical experiments, which show that updating the Kalman filter estimate of a state vector of length $10^9$ and its covariance matrix takes about 0.007s on a standard desktop computer in Matlab.
\end{abstract}

\end{frontmatter}

\section{Introduction}
After its publication in 1960, the Kalman filter~\cite{kalman1960new} was quickly adopted into the Apollo onboard guidance system~\cite{mcgee1985discovery} and has found many other applications ever since. The square-root filter is a more numerically stable implementation and was first developed by Potter~\cite{potter1963statistical}. It replaces the covariance matrix in the Kalman filter equations by its Cholesky factor, which is better conditioned. Over the next decade other numerically stable implementations, which also use Cholesky factors, were developed~\cite{bierman1977factorization,carlson1973fast,morf1975square}. Extensions of the Kalman filter to nonlinear models are the Extended Kalman filter (EKF)~\cite{schmidt1976practical,jazwinski2007stochastic}, the statistically linearized filter (SLF)~\cite{gelb1974applied} and the unscented Kalman filter (UKF)~\cite{julier1995new,julier2004unscented}. All these filters turn out to be specific instances of Bayesian filters~\cite{bayesianfiltering}, where the Kalman filter solution emerges from the assumption that both the dynamic and measurement models are linear Gaussian.

The Kalman filter is inherently limited by the length of the state vector that is to be estimated. For example, using a Kalman filter to estimate a state vector with a length $n^d$ will quickly become intractable, even for moderate sizes of $n$ and $d$. In this article, we explain how Tensor Networks\cite{TNorus,Cichocki2014} enable the estimation of exponentially long state vectors in a computationally efficient manner. The main paradigm used in the Tensor Network framework is to represent the exponentially long state vectors and their corresponding covariance matrices as tensors in a network. These tensors are called the Tensor Network cores and all computations of the Kalman filter are performed directly on the cores. We show in Section \ref{sec:implementation} that this reduces the computational complexity and storage cost from $O(n^d)$ to $O(dn)$.

A particularly well-suited application of the Tensor Network Kalman filter is the recursive identification of discrete-time multiple-input-multiple-output (MIMO) Volterra systems~\cite{wiener2013nonlinear,R:81}. These nonlinear systems have been extensively studied and applied in applications like speech modeling~\cite{Mumolo1993}, loudspeaker linearization~\cite{Kajikawa2008}, nonlinear control~\cite{fj2012identification}, active noise control~\cite{Tan2001}, modeling of biological and physiological systems~\cite{korenberg1996identification}, nonlinear communication channel identification and equalization~\cite{cheng2001optimal,fernandesmotalnlm82}, distortion analysis~\cite{Wambacq1998} and many others. Their applicability has been limited however to ``weakly nonlinear systems", where the nonlinear effects play a non-negligible role but are dominated by the linear terms. This limitation is not inherent to the Volterra series themselves, as they can also represent strongly nonlinear dynamical systems, but is due to the exponentially growing number of Volterra kernel coefficients as the degree increases. Indeed, assuming a finite memory $M$, the $d$th-order response of a discrete-time single-input single-output (SISO) Volterra system is given by
\begin{align*}
y_d(t) &= \sum_{k_1,\ldots,k_d=0}^{M-1} h_d(k_1,\ldots,k_d)\,\prod_{i=1}^{d} u(t-k_i), 
\end{align*}
where $y_d(t),u(t)$ are the scalar output and input at time $t$ respectively and the $d$th-order Volterra kernel $h_d(k_1,\ldots,k_d)$ is described by $M^d$ numbers. For a multiple-input multiple-output (MIMO) Volterra system with $p$ inputs the situation gets even worse, since the $d$th-order Volterra kernel for one particular output is characterized by $(pM)^d$ numbers. This exponential growth of the number of kernel coefficients is one particular example of the infamous \emph{curse of dimensionality}.

In order to apply the Tensor Network Kalman filter to the problem of recursive system identification of MIMO Volterra systems, we first rewrite the MIMO Volterra system as a linear state space model of the Volterra kernel coefficients. The system identification problem is in this way converted into a state estimation problem. The linear state space description of SISO Volterra systems for the identification of its kernel coefficients has appeared in \cite{Barner2006}. The curse of dimensionality however limits the application of their method to low degree Volterra systems. After having converted the MIMO Volterra system into a linear state space mode, we present a Tensor Network description of MIMO Volterra systems~\cite{MVMALS}. This description effectively enables the use of a Tensor Network Kalman filter to solve the state estimation problem. In contrast with the system identification method described in~\cite{MVMALS}, the Kalman filter approach explicitly takes the effect of measurements noise into account. Furthermore, we derive how the Tensor Network cores are initialized without the explicit construction of the prohibitively large mean vectors and covariance matrices. In short, the main contributions of this article are
\begin{itemize}
\item the Kalman filter equations are rewritten as tensor equations to accommodate for the estimation of multiple state vectors at once,
\item each of the Kalman filter tensor equations are computed in the Tensor Network format, resulting in a significant reduction of computational complexity and storage cost,
\item the Tensor Network Kalman filter is applied for the recursive identification of MIMO Volterra systems.
\end{itemize}

The outline of this article is as follows. In Section \ref{sec:prelim} we give a brief overview of important tensor concepts and Tensor Network theory.  The Tensor Network Kalman filter is derived in Section \ref{sec:kalman} and its implementation is discussed in Section \ref{sec:implementation}. The MIMO Volterra Tensor Network framework from \cite{MVMALS} is reviewed and the application of the Tensor Network Kalman filter to the system identification problem is discussed in Section \ref{sec:VolterraTensor}. In Section \ref{sec:experiments}, numerical experiments demonstrate the accuracy and computational efficiency of the Tensor Network Kalman filter when applied for recursive MIMO Volterra system identification. Matlab/Octave implementations of our algorithms are freely available from \url{https://github.com/kbatseli/TNKalman}.

\section{Preliminaries}
\label{sec:prelim}
\subsection{Tensor basics}
Tensors in this article are multi-dimensional arrays that generalize the notions of vectors and matrices to higher orders. A $d$-way or $d$th-order tensor is denoted $\ten{A} \in \mathbb{R}^{n_1 \times  n_2 \times \cdots \times n_d}$ and hence each of its entries $a_{i_1i_2\cdots i_d}$ is determined by $d$ indices. We use the convention that indices start from 1, such that $1 \leq i_k \leq n_k\, (k=1,\ldots,d)$. The numbers $n_1,n_2,\ldots,n_d$ are called the dimensions of the tensor. A tensor is cubical if all its dimensions are equal. For practical purposes, only real tensors are considered. We use boldface capital calligraphic letters $\ten{A},\ten{B},\ldots$ to denote tensors, boldface capital letters $\mat{A},\mat{B},\ldots$ to denote matrices, boldface letters $\mat{a},\mat{b},\ldots$ to denote vectors, and Roman letters $a,b,\ldots$ to denote scalars. The elements of a set of $d$ tensors, in particular in the context of Tensor Networks, are denoted $\ten{A}^{(1)},\ten{A}^{(2)},\ldots,\ten{A}^{(d)}$. The transpose of a matrix $\mat{A}$ or vector $\mat{a}$ are denoted $\mat{A}^T$ and $\mat{a}^T$, respectively. The unit matrix of order $n$ is denoted $\mat{I}_n$. The tensor with all zero entries is denoted $\ten{O}$. We also adopt the Matlab notation $\textrm{diag}(\mat{a})$ for a diagonal matrix with entries $a_i$. Similarly, $\textrm{diag}(\mat{A})$ denotes the diagonal entries of a matrix $\mat{A}$.

Good introductions to tensors in scientific computing and signal processing are \cite{tensorreview,Cichocki2015}. The work in this article builds upon the tensor framework described in \cite{MVMALS}, in which a Tensor Network alternating linear scheme is derived for the identification of MIMO Volterra systems. Due to space limitation, we refer the reader to the discussion presented in \cite{MVMALS} on basic tensor operations. The same notation and concepts will be used in this article. Additional important tensor operations for this article that are not described in \cite{MVMALS} are given below.
\begin{definition}\cite[p.~462]{tensorreview}(Khatri-Rao product) Given matrices $\mat{A} \in \mathbb{R}^{n \times l}, \mat{B} \in \mathbb{R}^{m \times l}$, then their Khatri-Rao product $\mat{A} \odot \mat{B} \in \mathbb{R}^{nm \times l}$ is defined as the column-wise Kronecker product
\begin{align*}
\mat{A} \odot \mat{B} := \begin{pmatrix}\mat{A}(:,1) \otimes \mat{B}(:,1) &  \cdots & \mat{A}(:,l) \otimes \mat{B}(:,l)\end{pmatrix},
\end{align*}
where we used the Matlab-notation $\mat{A}(:,k)$ to denote the $k$th column of the matrix $\mat{A}$.
\end{definition}
The Khatri-Rao product of two matrices $\mat{A},\mat{B}$ hence corresponds with the matrix that contains the column-wise Kronecker product of $\mat{A}$ with $\mat{B}$. Similarly, an operation will be required where the Kronecker product is replaced with the outer product.
\begin{definition}Given matrices $\mat{A} \in \mathbb{R}^{n \times l}, \mat{B} \in \mathbb{R}^{m \times l}$, then their column-wise outer product $\mat{A}\, \square \, \mat{B} \in \mathbb{R}^{n\times m \times l}$ is defined as the column-wise outer product such that
\begin{align*}
\mat{A}\, \square \, \mat{B}(:,:,k) := \mat{A}(:,k) \circ \mat{B}(:,k),
\end{align*}
for $k=1,\ldots,l$ and where $\circ$ denotes the outer product\cite[p.~458]{tensorreview}.
\end{definition}
One can obtain $\mat{A}\, \square \, \mat{B}$ from reshaping the matrix $\mat{A}\, \odot \, \mat{B}$ into a 3-way tensor. Next we will provide the definition of the tensor Kronecker product, but before doing so, we first need to discuss multi-indices. A $d$-way tensor $\ten{A}$ is essentially a collection of numbers $a_{i_1\cdots i_d}$, and there are therefore many ways to arrange the entries. These different arrangements are equivalent with the grouping of the indices into separate groups. Consider the case where the first $k$ indices are grouped together into the multi-index $[i_1i_2\cdots i_k]$, keeping all remaining indices separate. This reduces the order of the tensor from $d$ to $d-k+1$. The multi-index $[i_1i_2\cdots i_k]$ is converted into a single linear index as
\begin{align*}
i_1 + (i_2-1) n_1 + \cdots + (i_k-1) n_1\,n_2\,\cdots \,n_{k-1}.
\end{align*}
A particular useful case occurs when $k=d$, for which the tensor $\ten{A}$ is reshaped into a vector, called the vectorization $\textrm{vec}(\ten{A})$, with entries indexed by $[i_1i_2\cdots i_d]$. With the notation of multi-indices explained, the definition of the tensor Kronecker product can be given.
\begin{definition}\cite{tkpsvd,Phan2012}(Tensor Kronecker product)
\label{def:tkron}
Let $\ten{B} \in \mathbb{R}^{n_1 \times n_2 \times \cdots \times n_d},\ten{C} \in \mathbb{R}^{m_1 \times m_2 \times \cdots \times m_d}$ be two $d$-way tensors with entries denoted by $b_{i_{d+1}\cdots i_{2d}}$ and $c_{i_1\cdots i_d}$, respectively. The tensor Kronecker product $\ten{A} = \ten{B} \otimes \ten{C} \in  \mathbb{R}^{n_1m_1 \times n_2m_2 \times \cdots \times n_km_k}$ is then defined from 
$$
a_{[i_1i_{d+1}][i_2i_{d+2}]\cdots [i_di_{2d}]} = b_{i_{d+1}\cdots i_{2d}}\,c_{i_1\cdots i_d},
$$
which needs to hold for all possible values of $i_1,\ldots,i_{2d}$.
\end{definition}

\subsection{Tensor Network theory}
The notion of Tensor Networks comes from physics, where they are used to describe the wave-function of entangled many-body quantum systems \cite{TNorus}. More recently, their application in big data processing has been proposed in \cite{Cichocki2014}. A Tensor Network (TN) is a set of tensors where some, or all, indices are contracted. These contractions constitute the network. An index contraction is the sum over all possible values of the repeated indices of a set of tensors. For example, the index contraction over the tensors $\ten{A},\ten{B},\ten{C}$
\begin{align}
d_{i_1i_3i_5i_6i_7} &= \sum_{i_2,i_4} a_{i_1i_2i_3} \; b_{i_2i_4} \; c_{i_4i_5i_6i_7}  
\label{eq:TNexample1}
\end{align}
results in a 5-way tensor $\ten{D}$. A particular useful representation of TNs is in terms of TN diagrams. In these diagrams, tensors are represented as nodes in a graph and the branches represent indices. Branches that are fully connected represent the contraction of the corresponding index. We will refer to the tensors inside a TN as the TN-cores and the dimensions of the contracted indices as the corresponding TN-ranks. The term TN-ranks is reserved in the literature for the minimal dimensions of the contracted indices such that the TN represents the underlying tensor exactly. Therefore, if one chooses dimensions for the contracted indices that are smaller than the TN-ranks, an approximation of the underlying tensor is obtained. The TN diagram of equation \eqref{eq:TNexample1} is shown in Figure \ref{fig:TN1}, where all indices are explicitly indicated. In this example, the contracted indices are $i_2,i_4$ and the corresponding TN-ranks are hence denoted $r_2,r_4$. The order of the resulting tensor in the TN can be quickly deduced from counting the number of ``free" indices in the diagram.
\begin{figure}[tb]
\begin{center}
\ifx\du\undefined
  \newlength{\du}
\fi
\setlength{\du}{5.5\unitlength}
\begin{tikzpicture}
\pgftransformxscale{1.000000}
\pgftransformyscale{-1.000000}
\definecolor{dialinecolor}{rgb}{0.000000, 0.000000, 0.000000}
\pgfsetstrokecolor{dialinecolor}
\definecolor{dialinecolor}{rgb}{1.000000, 1.000000, 1.000000}
\pgfsetfillcolor{dialinecolor}
\definecolor{dialinecolor}{rgb}{1.000000, 1.000000, 1.000000}
\pgfsetfillcolor{dialinecolor}
\pgfpathellipse{\pgfpoint{10.846636\du}{10.823318\du}}{\pgfpoint{3.453364\du}{0\du}}{\pgfpoint{0\du}{3.376682\du}}
\pgfusepath{fill}
\pgfsetlinewidth{0.100000\du}
\pgfsetdash{}{0pt}
\pgfsetdash{}{0pt}
\pgfsetmiterjoin
\definecolor{dialinecolor}{rgb}{0.000000, 0.000000, 0.000000}
\pgfsetstrokecolor{dialinecolor}
\pgfpathellipse{\pgfpoint{10.846636\du}{10.823318\du}}{\pgfpoint{3.453364\du}{0\du}}{\pgfpoint{0\du}{3.376682\du}}
\pgfusepath{stroke}
\definecolor{dialinecolor}{rgb}{0.000000, 0.000000, 0.000000}
\pgfsetstrokecolor{dialinecolor}
\node at (10.846636\du,11.509707\du){$\bm{\mathcal{A}}$};
\definecolor{dialinecolor}{rgb}{1.000000, 1.000000, 1.000000}
\pgfsetfillcolor{dialinecolor}
\pgfpathellipse{\pgfpoint{25.598364\du}{10.786682\du}}{\pgfpoint{3.453364\du}{0\du}}{\pgfpoint{0\du}{3.376682\du}}
\pgfusepath{fill}
\pgfsetlinewidth{0.100000\du}
\pgfsetdash{}{0pt}
\pgfsetdash{}{0pt}
\pgfsetmiterjoin
\definecolor{dialinecolor}{rgb}{0.000000, 0.000000, 0.000000}
\pgfsetstrokecolor{dialinecolor}
\pgfpathellipse{\pgfpoint{25.598364\du}{10.786682\du}}{\pgfpoint{3.453364\du}{0\du}}{\pgfpoint{0\du}{3.376682\du}}
\pgfusepath{stroke}
\definecolor{dialinecolor}{rgb}{0.000000, 0.000000, 0.000000}
\pgfsetstrokecolor{dialinecolor}
\node at (25.598364\du,11.473071\du){$\bm{\mathcal{B}}$};
\definecolor{dialinecolor}{rgb}{1.000000, 1.000000, 1.000000}
\pgfsetfillcolor{dialinecolor}
\pgfpathellipse{\pgfpoint{41.043364\du}{10.746682\du}}{\pgfpoint{3.453364\du}{0\du}}{\pgfpoint{0\du}{3.376682\du}}
\pgfusepath{fill}
\pgfsetlinewidth{0.100000\du}
\pgfsetdash{}{0pt}
\pgfsetdash{}{0pt}
\pgfsetmiterjoin
\definecolor{dialinecolor}{rgb}{0.000000, 0.000000, 0.000000}
\pgfsetstrokecolor{dialinecolor}
\pgfpathellipse{\pgfpoint{41.043364\du}{10.746682\du}}{\pgfpoint{3.453364\du}{0\du}}{\pgfpoint{0\du}{3.376682\du}}
\pgfusepath{stroke}
\definecolor{dialinecolor}{rgb}{0.000000, 0.000000, 0.000000}
\pgfsetstrokecolor{dialinecolor}
\node at (41.043364\du,11.433071\du){$\bm{\mathcal{C}}$};
\pgfsetlinewidth{0.100000\du}
\pgfsetdash{}{0pt}
\pgfsetdash{}{0pt}
\pgfsetbuttcap
{
\definecolor{dialinecolor}{rgb}{0.000000, 0.000000, 0.000000}
\pgfsetfillcolor{dialinecolor}
\definecolor{dialinecolor}{rgb}{0.000000, 0.000000, 0.000000}
\pgfsetstrokecolor{dialinecolor}
\draw (14.300000\du,10.823318\du)--(22.145000\du,10.786682\du);
}
\pgfsetlinewidth{0.100000\du}
\pgfsetdash{}{0pt}
\pgfsetdash{}{0pt}
\pgfsetbuttcap
{
\definecolor{dialinecolor}{rgb}{0.000000, 0.000000, 0.000000}
\pgfsetfillcolor{dialinecolor}
\definecolor{dialinecolor}{rgb}{0.000000, 0.000000, 0.000000}
\pgfsetstrokecolor{dialinecolor}
\draw (29.250000\du,10.750000\du)--(37.590000\du,10.746682\du);
}
\pgfsetlinewidth{0.100000\du}
\pgfsetdash{}{0pt}
\pgfsetdash{}{0pt}
\pgfsetbuttcap
{
\definecolor{dialinecolor}{rgb}{0.000000, 0.000000, 0.000000}
\pgfsetfillcolor{dialinecolor}
\definecolor{dialinecolor}{rgb}{0.000000, 0.000000, 0.000000}
\pgfsetstrokecolor{dialinecolor}
\draw (10.846636\du,14.200000\du)--(10.850000\du,17.300000\du);
}
\pgfsetlinewidth{0.100000\du}
\pgfsetdash{}{0pt}
\pgfsetdash{}{0pt}
\pgfsetbuttcap
{
\definecolor{dialinecolor}{rgb}{0.000000, 0.000000, 0.000000}
\pgfsetfillcolor{dialinecolor}
\definecolor{dialinecolor}{rgb}{0.000000, 0.000000, 0.000000}
\pgfsetstrokecolor{dialinecolor}
\draw (10.895054\du,4.360054\du)--(10.846636\du,7.446636\du);
}
\pgfsetlinewidth{0.100000\du}
\pgfsetdash{}{0pt}
\pgfsetdash{}{0pt}
\pgfsetbuttcap
{
\definecolor{dialinecolor}{rgb}{0.000000, 0.000000, 0.000000}
\pgfsetfillcolor{dialinecolor}
\definecolor{dialinecolor}{rgb}{0.000000, 0.000000, 0.000000}
\pgfsetstrokecolor{dialinecolor}
\draw (41.090054\du,4.270054\du)--(41.043364\du,7.370000\du);
}
\pgfsetlinewidth{0.100000\du}
\pgfsetdash{}{0pt}
\pgfsetdash{}{0pt}
\pgfsetbuttcap
{
\definecolor{dialinecolor}{rgb}{0.000000, 0.000000, 0.000000}
\pgfsetfillcolor{dialinecolor}
\definecolor{dialinecolor}{rgb}{0.000000, 0.000000, 0.000000}
\pgfsetstrokecolor{dialinecolor}
\draw (41.043364\du,14.123364\du)--(41.038418\du,17.230054\du);
}
\pgfsetlinewidth{0.100000\du}
\pgfsetdash{}{0pt}
\pgfsetdash{}{0pt}
\pgfsetbuttcap
{
\definecolor{dialinecolor}{rgb}{0.000000, 0.000000, 0.000000}
\pgfsetfillcolor{dialinecolor}
\definecolor{dialinecolor}{rgb}{0.000000, 0.000000, 0.000000}
\pgfsetstrokecolor{dialinecolor}
\draw (44.496728\du,10.746682\du)--(47.800000\du,10.700000\du);
}
\definecolor{dialinecolor}{rgb}{0.000000, 0.000000, 0.000000}
\pgfsetstrokecolor{dialinecolor}
\node[anchor=west] at (9.550000\du,19.500000\du){$i_1$};
\definecolor{dialinecolor}{rgb}{0.000000, 0.000000, 0.000000}
\pgfsetstrokecolor{dialinecolor}
\node[anchor=west] at (9.595000\du,3.207500\du){$i_3$};
\definecolor{dialinecolor}{rgb}{0.000000, 0.000000, 0.000000}
\pgfsetstrokecolor{dialinecolor}
\node[anchor=west] at (39.840000\du,19.767500\du){$i_5$};
\definecolor{dialinecolor}{rgb}{0.000000, 0.000000, 0.000000}
\pgfsetstrokecolor{dialinecolor}
\node[anchor=west] at (39.735000\du,3.127500\du){$i_6$};
\definecolor{dialinecolor}{rgb}{0.000000, 0.000000, 0.000000}
\pgfsetstrokecolor{dialinecolor}
\node[anchor=west] at (48.380000\du,11.287500\du){$i_7$};
\definecolor{dialinecolor}{rgb}{0.000000, 0.000000, 0.000000}
\pgfsetstrokecolor{dialinecolor}
\node[anchor=west] at (16.575000\du,9.897500\du){$i_2$};
\definecolor{dialinecolor}{rgb}{0.000000, 0.000000, 0.000000}
\pgfsetstrokecolor{dialinecolor}
\node[anchor=west] at (31.870000\du,9.757500\du){$i_4$};
\end{tikzpicture}
\caption{A TN of three tensors $\ten{A},\ten{B},\ten{C}$ resulting in a 5-way tensor.}
\label{fig:TN1}
\end{center}
\end{figure}
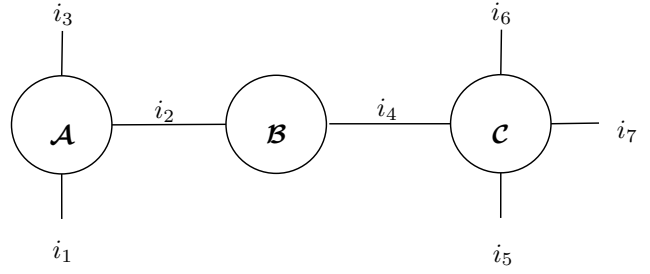
Two important TNs for the Kalman filter are the Tensor Train (TT)\cite{ivanTT} and the Tensor Train matrix (TTm)\cite{Oseledets2010}. A TT, represented by its TN diagram in Figure \ref{fig:TT}, is a TN of $d$ linearly connected cores $\ten{A}^{(1)},\ldots,\ten{A}^{(d)}$. The border cores are 2-way tensors (matrices) and the remaining cores are 3-way. A TTm extends the notion of a TT by increasing the order of each of the cores by 1.
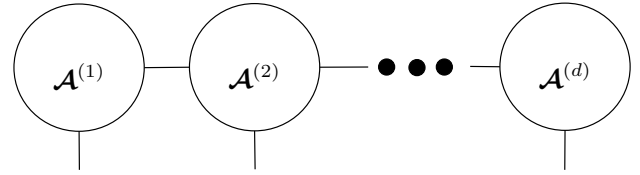
\begin{figure}[tb]
\begin{center}
\ifx\du\undefined
  \newlength{\du}
\fi
\setlength{\du}{4.5\unitlength}
\begin{tikzpicture}
\pgftransformxscale{1.000000}
\pgftransformyscale{-1.000000}
\definecolor{dialinecolor}{rgb}{0.000000, 0.000000, 0.000000}
\pgfsetstrokecolor{dialinecolor}
\definecolor{dialinecolor}{rgb}{1.000000, 1.000000, 1.000000}
\pgfsetfillcolor{dialinecolor}
\definecolor{dialinecolor}{rgb}{1.000000, 1.000000, 1.000000}
\pgfsetfillcolor{dialinecolor}
\pgfpathellipse{\pgfpoint{8.848364\du}{12.803318\du}}{\pgfpoint{5.475129\du}{0\du}}{\pgfpoint{0\du}{5.353554\du}}
\pgfusepath{fill}
\pgfsetlinewidth{0.100000\du}
\pgfsetdash{}{0pt}
\pgfsetdash{}{0pt}
\pgfsetmiterjoin
\definecolor{dialinecolor}{rgb}{0.000000, 0.000000, 0.000000}
\pgfsetstrokecolor{dialinecolor}
\pgfpathellipse{\pgfpoint{8.848364\du}{12.803318\du}}{\pgfpoint{5.475129\du}{0\du}}{\pgfpoint{0\du}{5.353554\du}}
\pgfusepath{stroke}
\definecolor{dialinecolor}{rgb}{0.000000, 0.000000, 0.000000}
\pgfsetstrokecolor{dialinecolor}
\node at (8.848364\du,13.489707\du){$\bm{\mathcal{A}}^{(1)}$};
\definecolor{dialinecolor}{rgb}{1.000000, 1.000000, 1.000000}
\pgfsetfillcolor{dialinecolor}
\pgfpathellipse{\pgfpoint{23.600092\du}{12.766682\du}}{\pgfpoint{5.475130\du}{0\du}}{\pgfpoint{0\du}{5.353555\du}}
\pgfusepath{fill}
\pgfsetlinewidth{0.100000\du}
\pgfsetdash{}{0pt}
\pgfsetdash{}{0pt}
\pgfsetmiterjoin
\definecolor{dialinecolor}{rgb}{0.000000, 0.000000, 0.000000}
\pgfsetstrokecolor{dialinecolor}
\pgfpathellipse{\pgfpoint{23.600092\du}{12.766682\du}}{\pgfpoint{5.475130\du}{0\du}}{\pgfpoint{0\du}{5.353555\du}}
\pgfusepath{stroke}
\definecolor{dialinecolor}{rgb}{0.000000, 0.000000, 0.000000}
\pgfsetstrokecolor{dialinecolor}
\node at (23.600092\du,13.453071\du){$\bm{\mathcal{A}}^{(2)}$};
\definecolor{dialinecolor}{rgb}{1.000000, 1.000000, 1.000000}
\pgfsetfillcolor{dialinecolor}
\pgfpathellipse{\pgfpoint{49.695092\du}{12.726682\du}}{\pgfpoint{5.472776\du}{0\du}}{\pgfpoint{0\du}{5.351253\du}}
\pgfusepath{fill}
\pgfsetlinewidth{0.100000\du}
\pgfsetdash{}{0pt}
\pgfsetdash{}{0pt}
\pgfsetmiterjoin
\definecolor{dialinecolor}{rgb}{0.000000, 0.000000, 0.000000}
\pgfsetstrokecolor{dialinecolor}
\pgfpathellipse{\pgfpoint{49.695092\du}{12.726682\du}}{\pgfpoint{5.472776\du}{0\du}}{\pgfpoint{0\du}{5.351253\du}}
\pgfusepath{stroke}
\definecolor{dialinecolor}{rgb}{0.000000, 0.000000, 0.000000}
\pgfsetstrokecolor{dialinecolor}
\node at (49.695092\du,13.413071\du){$\bm{\mathcal{A}}^{(d)}$};
\pgfsetlinewidth{0.100000\du}
\pgfsetdash{}{0pt}
\pgfsetdash{}{0pt}
\pgfsetbuttcap
{
\definecolor{dialinecolor}{rgb}{0.000000, 0.000000, 0.000000}
\pgfsetfillcolor{dialinecolor}
\definecolor{dialinecolor}{rgb}{0.000000, 0.000000, 0.000000}
\pgfsetstrokecolor{dialinecolor}
\draw (14.323493\du,12.803318\du)--(18.124962\du,12.766682\du);
}
\pgfsetlinewidth{0.100000\du}
\pgfsetdash{}{0pt}
\pgfsetdash{}{0pt}
\pgfsetbuttcap
{
\definecolor{dialinecolor}{rgb}{0.000000, 0.000000, 0.000000}
\pgfsetfillcolor{dialinecolor}
\definecolor{dialinecolor}{rgb}{0.000000, 0.000000, 0.000000}
\pgfsetstrokecolor{dialinecolor}
\draw (29.075222\du,12.766682\du)--(33.150000\du,12.750000\du);
}
\pgfsetlinewidth{0.100000\du}
\pgfsetdash{}{0pt}
\pgfsetdash{}{0pt}
\pgfsetbuttcap
{
\definecolor{dialinecolor}{rgb}{0.000000, 0.000000, 0.000000}
\pgfsetfillcolor{dialinecolor}
\definecolor{dialinecolor}{rgb}{0.000000, 0.000000, 0.000000}
\pgfsetstrokecolor{dialinecolor}
\draw (8.848364\du,18.156872\du)--(8.873235\du,21.400000\du);
}
\pgfsetlinewidth{0.100000\du}
\pgfsetdash{}{0pt}
\pgfsetdash{}{0pt}
\pgfsetbuttcap
{
\definecolor{dialinecolor}{rgb}{0.000000, 0.000000, 0.000000}
\pgfsetfillcolor{dialinecolor}
\definecolor{dialinecolor}{rgb}{0.000000, 0.000000, 0.000000}
\pgfsetstrokecolor{dialinecolor}
\draw (41.723235\du,12.700000\du)--(44.222316\du,12.726682\du);
}
\definecolor{dialinecolor}{rgb}{0.000000, 0.000000, 0.000000}
\pgfsetfillcolor{dialinecolor}
\pgfpathellipse{\pgfpoint{34.700000\du}{12.750000\du}}{\pgfpoint{0.675000\du}{0\du}}{\pgfpoint{0\du}{0.675000\du}}
\pgfusepath{fill}
\pgfsetlinewidth{0.100000\du}
\pgfsetdash{}{0pt}
\pgfsetdash{}{0pt}
\definecolor{dialinecolor}{rgb}{0.000000, 0.000000, 0.000000}
\pgfsetstrokecolor{dialinecolor}
\pgfpathellipse{\pgfpoint{34.700000\du}{12.750000\du}}{\pgfpoint{0.675000\du}{0\du}}{\pgfpoint{0\du}{0.675000\du}}
\pgfusepath{stroke}
\definecolor{dialinecolor}{rgb}{0.000000, 0.000000, 0.000000}
\pgfsetfillcolor{dialinecolor}
\pgfpathellipse{\pgfpoint{37.250000\du}{12.800000\du}}{\pgfpoint{0.675000\du}{0\du}}{\pgfpoint{0\du}{0.675000\du}}
\pgfusepath{fill}
\pgfsetlinewidth{0.100000\du}
\pgfsetdash{}{0pt}
\pgfsetdash{}{0pt}
\definecolor{dialinecolor}{rgb}{0.000000, 0.000000, 0.000000}
\pgfsetstrokecolor{dialinecolor}
\pgfpathellipse{\pgfpoint{37.250000\du}{12.800000\du}}{\pgfpoint{0.675000\du}{0\du}}{\pgfpoint{0\du}{0.675000\du}}
\pgfusepath{stroke}
\definecolor{dialinecolor}{rgb}{0.000000, 0.000000, 0.000000}
\pgfsetfillcolor{dialinecolor}
\pgfpathellipse{\pgfpoint{39.550000\du}{12.750000\du}}{\pgfpoint{0.675000\du}{0\du}}{\pgfpoint{0\du}{0.675000\du}}
\pgfusepath{fill}
\pgfsetlinewidth{0.100000\du}
\pgfsetdash{}{0pt}
\pgfsetdash{}{0pt}
\definecolor{dialinecolor}{rgb}{0.000000, 0.000000, 0.000000}
\pgfsetstrokecolor{dialinecolor}
\pgfpathellipse{\pgfpoint{39.550000\du}{12.750000\du}}{\pgfpoint{0.675000\du}{0\du}}{\pgfpoint{0\du}{0.675000\du}}
\pgfusepath{stroke}
\pgfsetlinewidth{0.100000\du}
\pgfsetdash{}{0pt}
\pgfsetdash{}{0pt}
\pgfsetbuttcap
{
\definecolor{dialinecolor}{rgb}{0.000000, 0.000000, 0.000000}
\pgfsetfillcolor{dialinecolor}
\definecolor{dialinecolor}{rgb}{0.000000, 0.000000, 0.000000}
\pgfsetstrokecolor{dialinecolor}
\draw (23.600092\du,18.120237\du)--(23.643487\du,21.353510\du);
}
\pgfsetlinewidth{0.100000\du}
\pgfsetdash{}{0pt}
\pgfsetdash{}{0pt}
\pgfsetbuttcap
{
\definecolor{dialinecolor}{rgb}{0.000000, 0.000000, 0.000000}
\pgfsetfillcolor{dialinecolor}
\definecolor{dialinecolor}{rgb}{0.000000, 0.000000, 0.000000}
\pgfsetstrokecolor{dialinecolor}
\draw (49.695092\du,18.077935\du)--(49.673235\du,21.400000\du);
}
\end{tikzpicture}
\caption{A TT of $d$ linearly connected cores $\ten{A}^{(1)},\ldots,\ten{A}^{(d)}$ resulting in a $d$-way tensor.}
\label{fig:TT}
\end{center}
\end{figure}
The notion of a TN allows for a compact representation of a given tensor, thus avoiding the potential curse of dimensionality. The following example illustrates how a TT can be used to store a vector of exponential length.
\begin{example}
Suppose we have a vector $\mat{a}$ of length $10^{10}$. Each entry of $\mat{a}$ can be indexed by a multi-index $[i_1,\ldots,i_{10}]$. We now reshape $\mat{a}$ into a $10 \times 10 \times \cdots \times 10$ tensor $\ten{A}$ such that $ a_{i_1\cdots i_{10}} := a_{[i_1\cdots i_{10}]}$. The TN representation of $\ten{A}$ then consists of 10 TN-cores $\ten{A}^{(1)},\ldots,\ten{A}^{(10)}$ with TN-ranks $r_1,\ldots,r_9$. If we denote the maximal TN-rank by $r$ then the total storage requirement for the vector $\mat{a}$ is reduced from $10^{10}$ to $O(100\,r^2)$, which can be a significant reduction when $r$ is small. This conversion into the TT format assigns each index $i_k$ of the multi-index to the core $\ten{A}^{(k)}$.
\end{example}
Similarly, the TTm format represents an $n^d \times n^d$ matrix, where each TTm-core $\ten{A}^{(k)}$ now is a 4-way tensor, with 2 free indices $i_k,j_k$, one denoting a row index and one denoting a column index, respectively. Using the TTm format reduces the storage requirement from $n^{2d}$ to $O(dn^2r^2)$. In the context of MIMO Volterra systems, a modification of the TT and TTm concepts will be made where the order of the border tensor $\ten{A}^{(1)}$ is increased by one. This extension is further explained in Section \ref{sec:VolterraTensor}. In the following two sections both the derivation and implementation of the TN Kalman filter is given.

\section{Tensor Network Kalman filter}
\label{sec:kalman}
Consider the following linear discrete-time state space model
\begin{align}
\nonumber \mat{X}(t+1) &= \mat{A}(t)\, \mat{X}(t) + \mat{W}(t),\\ 
\mat{y}(t) &= \mat{c}(t) \, \mat{X}(t) + \mat{r}(t),
\label{eqn:linearstatespace}
\end{align}
where $\mat{X}(t) \in \mathbb{R}^{n^d \times l}$ is the matrix containing $l$ exponentially long state vectors, $\mat{y}(t) \in \mathbb{R}^{1 \times l}$ is a vector of $l$ measurements, $\mat{A}(t) \in \mathbb{R}^{n^d \times n^d }$ is the state transition matrix, $\mat{c}(t) \in \mathbb{R}^{1 \times n^d}$ converts the state vectors into measurements and $\mat{W}(t) \in \mathbb{R}^{n^d \times l}, \mat{r}(t) \in \mathbb{R}^{1 \times l}$ denote process and measurement noise, respectively. It is possible to generalize $\mat{c}(t)$ to a matrix, thus allowing for matrices of measurements but we leave this for future work. For compactness the time-dependence in the notation is removed and the following assumptions are made:
\begin{itemize}
\item Each column $\mat{x}_k \,(k=1,\ldots,l)$ of the matrix $\mat{X}(0)$ follows a multivariate Gaussian distribution
\begin{align*}
\textrm{N}(\mat{m}_k,\mat{P}_k) = \frac{1}{Z}\, \textrm{exp}\left( -\frac{1}{2} (\mat{x}_k-\mat{m}_k)^T \, \mat{P}_k \, (\mat{x}_k-\mat{m}_k) \right ),
\end{align*}
with normalization constant $Z:=((2\,\pi)^{n^d/2}\,|\mat{P}_k|^{1/2})^{-1}$, where $|\mat{P}_k|$ denotes the determinant of $\mat{P}_k$. The vectors $\mat{m}_k$ are collected in the matrix $\mat{M} \in \mathbb{R}^{n^d \times l}$ and similarly all covariance matrices are collected into a 3-way tensor $\ten{P} \in \mathbb{R}^{n^d \times n^d \times l}$,
\item each column of the process noise matrix $\mat{W}$ is a multivariate Gaussian white noise process. This implies zero means and diagonal covariance matrices, which are collected into a 3-way tensor $\ten{Q} \in \mathbb{R}^{n^d \times n^d \times l}$,
\item the measurement noise $\mat{r}$ is a multivariate Gaussian white noise process with diagonal covariance matrix $\mat{R} \in \mathbb{R}^{l\times l}$. The row vector containing the diagonal entries of $\mat{R}$ is denoted $\textrm{diag}(\mat{R})$. 
\end{itemize}
In Bayesian filtering, one is interested in computing the distribution of the current state given the current and all previous measurements of the output $p(\mat{x}_k(t) | \mat{y}(1),\ldots,\mat{y}(t) )$. The linearity of the state space model together with a Markov-Gaussian assumption of the distributions results in the Kalman filter equations as the solution to the Bayesian filtering problem for each of the columns of $\mat{X}$~\cite[p.~56]{bayesianfiltering}. However, it is not necessary to run a Kalman filter for each of the $l$ columns of $\mat{X}$ separately. The whole Bayesian filtering problem can be solved with one Kalman filter, where both the prediction and update steps are rewritten as tensor equations. We reintroduce the time-dependence in the notation and denote the matrix of predicted means $\mat{M}(t)$ and tensor of predicted covariance matrices $\ten{P}(t)$ by $\mat{M}^+$ and $\ten{P}^+$, respectively. The tensor-times-matrix $k$-mode product is denoted by $\times_k$~\cite[p.~460]{tensorreview}. The Kalman filter prediction step is then rewritten as
\begin{align}
 \mat{M}^+ &= \mat{M}(t-1) \times_1 \mat{A}(t-1), \nonumber \\
\ten{P}^+&= \ten{P}(t-1) \times_1 \mat{A}(t-1) \times_2 \mat{A}(t-1) + \ten{Q}(t-1).
\label{eqns:predict}
\end{align}
Similarly, the Kalman filter update step is rewritten as
\begin{align}
\nonumber \mat{v} &= \mat{y}(t) - \mat{M}^+ \times_1 \mat{c}(t),\\
\nonumber \mat{s} &= \ten{P}^+\times_1 \mat{c}(t) \times_2 \mat{c}(t) + \textrm{diag}(\mat{R}(t)),\\
\nonumber\mat{K} &= \ten{P}^+\times_2 \mat{c}(t) \times_3 \textrm{diag}(\mat{s})^{-1},\\
\nonumber \mat{M}(t) &= \mat{M}^+ + \mat{K}\times_2 \textrm{diag}(\mat{v}),\\
 \ten{P}(t) &= \ten{P}^+ - (\mat{K} \, \square \, \mat{K})\times_3 \textrm{diag}(\mat{s}).
\label{eqns:update}
\end{align}
The prediction and update equations \eqref{eqns:predict} and \eqref{eqns:update} reduce to the standard matrix equations of a Kalman filter for a scalar output $y(t)$ when $l=1$. Both the mean $\mat{M}$ and the Kalman gain $\mat{K}$ reduce to column vectors for this case. Note that $\textrm{diag}(\mat{s})$ and $\textrm{diag}(\mat{v})$ both denote diagonal matrices containing the entries of $\mat{s},\mat{v}$, respectively. The notation $\textrm{diag}(\mat{R}(t))$ on the other hand denotes a row vector, containing the diagonal entries of the matrix $\mat{R}(t)$. The dimensions of the matrices and tensors in both \eqref{eqns:predict} and \eqref{eqns:update} suffer from the curse of dimensionality. The Kalman filter steps can therefore only be computed for small values of $n$ and $d$. In the next section we show how each iteration of this tensor Kalman filter can be efficiently computed using TNs, enabling much larger values of $n$ and $d$. In practice, one would use a square-root filter for improved numerical conditioning. Similarly to \eqref{eqns:predict} and \eqref{eqns:update}, one can rewrite the square root predict and update equations as tensor equations and compute them in the TN format. This will be reported in our future work. 

\section{Implementation}
\label{sec:implementation}
In this section we explain how each tensor equation of the tensor Kalman filter \eqref{eqns:predict} and \eqref{eqns:update} is computed in terms of TNs. The key idea is here that all computations are done in terms of TNs and all results stay in the TN format. This means that at no point in time the matrices and tensors in \eqref{eqns:predict} and \eqref{eqns:update} are explicitly constructed. Starting the computations requires the matrices $\mat{M}(0),\mat{A}(0)$ and the tensors $\ten{P}(0),\ten{Q}(0)$ in the TN format. It is therefore crucial to be able to initialize these TNs efficiently. The following additional assumptions are therefore made:
\begin{itemize}
\item $l \ll n^d$,
\item the matrix of initial mean vectors $\mat{M}(0)$ is the zero matrix,
\item each of the $l$ covariance matrices inside $\ten{P}(0)$ is a diagonal matrix with a constant value $\sigma_k^2 \,(k=1,\ldots,l)$ on the diagonal,
\item the matrix $\mat{A}(t)$ is given in the TTm format,
\item the row vector $\mat{c}(t)$ is given in the TT format.
\end{itemize}
The TN representation of the $n^d \times l$ matrix $\mat{M}(t)$ can be thought of as a TT where the first core $\ten{M}^{(1)}(t)$ is a $l \times n \times r_1$ tensor. The covariance tensor $\ten{P}(t)$ can be represented by a TN similar to the TTm format where the first core is an $l\times n \times n \times R_1$ tensor. The following lemmas explain how $\mat{M}(0)$ and $\ten{P}(0)$ can be directly initialized in their corresponding TN formats when the previously mentioned assumptions hold.
\begin{lemma}
All $d$ TN-cores $\ten{M}^{(1)},\ldots,\ten{M}^{(d)}$ of $\mat{M}(0)$ have unit TN-ranks with
\begin{align*}
\ten{M}^{(1)} &= \ten{O} \in \mathbb{R}^{l \times n \times 1},\\
\ten{M}^{(k)} &= \ten{O} \in \mathbb{R}^{1 \times n \times 1} \quad (k=2,\ldots,d).
\end{align*}
\label{lemma:M0}
\end{lemma}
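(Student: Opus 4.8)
The plan is to verify directly that contracting the proposed cores reproduces the zero matrix $\mat{M}(0)$, and then to confirm that unit TN-ranks are both sufficient and the natural choice for this TT. Since the lemma is essentially a consistency check between the assumption $\mat{M}(0)=\ten{O}$ and the TT template described just above the statement, the whole argument reduces to evaluating a contraction.

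First I would write down the tensor contraction that the TT of $\ten{M}^{(1)},\ldots,\ten{M}^{(d)}$ represents. Following the convention in the excerpt that the column index of size $l$ is attached to the first core, the entry of $\mat{M}(0)$ indexed by the column $k$ and the row multi-index $[i_1\cdots i_d]$ is obtained by summing over the $d-1$ connecting rank indices,
\[
\mat{M}(0)_{[i_1\cdots i_d]\,k} = \sum_{\alpha_1,\ldots,\alpha_{d-1}} \ten{M}^{(1)}_{k\,i_1\,\alpha_1}\,\ten{M}^{(2)}_{\alpha_1\,i_2\,\alpha_2}\cdots \ten{M}^{(d)}_{\alpha_{d-1}\,i_d}.
\]
With unit TN-ranks each $\alpha_j$ ranges over a single value, so this sum collapses to a single product of core entries. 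Since every core equals $\ten{O}$, each factor in this product vanishes, and hence every entry on the left is zero. Therefore the proposed TT represents exactly the $n^d \times l$ zero matrix, which coincides with $\mat{M}(0)$ under the stated assumption.

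Next I would address the rank and dimension claims. Because the zero matrix is represented exactly by the construction above, unit TN-ranks are \emph{sufficient}; they also cannot be lowered in any nondegenerate way, since a rank-zero connecting index would be empty, so unit TN-ranks are the canonical choice. The stated dimensions then follow immediately from the TT template: the border core $\ten{M}^{(1)}$ carries the column index of size $l$ together with the first physical index of size $n$, giving shape $l \times n \times 1$, while each remaining core carries only one physical index of size $n$, giving shape $1 \times n \times 1$.

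The only point needing care — and it is bookkeeping rather than a genuine obstacle — is matching the free index of size $l$ to the first core and confirming that the reshape of the row index into the multi-index $[i_1\cdots i_d]$ is the one fixed earlier, so that the displayed contraction is the correct interpretation of the proposed TT. Once that identification is in place, the lemma follows at once from the fact that a product of zero-valued core entries is zero.
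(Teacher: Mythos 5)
Your proposal is correct and follows essentially the same route as the paper, which justifies Lemmas \ref{lemma:M0} and \ref{lemma:P0} by the single observation that a TN with all unit TN-ranks equals the outer product of its cores; your index-level computation, in which the rank sums collapse to a single product of core entries that vanishes because every core is $\ten{O}$, is exactly that observation written out explicitly, combined with the standing assumption that $\mat{M}(0)$ is the zero matrix.
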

The total storage cost to store $\mat{M}(0)$ is hence reduced from $n^d$ to $(l+d-1)n$. The next lemma explains how the TN of $\ten{P}(0)$ can be initialized.
\begin{lemma}
The TN of $\ten{P}(0)$ consists of the following TN-cores with all unit TN-ranks
\begin{align*}
\ten{P}^{(1)}(i,:,:,1) &= \sigma_{i}^2 \, \mat{I}_{n} \; (i=1,\ldots,l),\\
\ten{P}^{(k)}(1,:,:,1)  &= \mat{I}_{n} \quad (k=2,\ldots,d).
\end{align*}
\label{lemma:P0}
\end{lemma}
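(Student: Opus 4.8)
The plan is to reduce the lemma to a single structural fact: the $n^d \times n^d$ identity matrix factorizes as a $d$-fold Kronecker product $\mat{I}_{n^d} = \mat{I}_n \otimes \mat{I}_n \otimes \cdots \otimes \mat{I}_n$, and any Kronecker product of $d$ matrices admits an \emph{exact} TTm representation with all TN-ranks equal to one, whose $k$th core is the $k$th Kronecker factor. First I would fix a slice index $i \in \{1,\ldots,l\}$ and invoke the standing assumption that each of the $l$ initial covariance matrices is diagonal with constant diagonal entry $\sigma_i^2$; the $i$th covariance matrix is therefore exactly $\sigma_i^2\,\mat{I}_{n^d}$. The whole task is then to exhibit a unit-rank TN for $\sigma_i^2\,\mat{I}_{n^d}$ that is moreover uniform enough across $i$ to be packed into the single $l\times n\times n\times R_1$ front core described just before the lemma.

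The second step is to verify the Kronecker factorization by a direct multi-index computation. Writing a row index as the multi-index $[i_1\cdots i_d]$ and a column index as $[j_1\cdots j_d]$, the entry of the $d$-fold Kronecker product is $\prod_{k=1}^d (\mat{I}_n)_{i_k j_k} = \prod_{k=1}^d \delta_{i_k j_k}$, which equals $1$ precisely when $i_k=j_k$ for every $k$ and $0$ otherwise. By the multi-index conversion of Section \ref{sec:prelim} this is exactly the $([i_1\cdots i_d],[j_1\cdots j_d])$ entry of $\mat{I}_{n^d}$, so $\sigma_i^2\,\mat{I}_{n^d} = \sigma_i^2\,(\mat{I}_n \otimes \cdots \otimes \mat{I}_n)$, in agreement with the tensor Kronecker product of Definition \ref{def:tkron} applied to $d$ identity factors.

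It then remains to read off the cores and account for the extra $l$-index. Contracting a TTm with unit TN-ranks yields, for each slice $i$, the product $\ten{P}^{(1)}(i,i_1,j_1,1)\,\prod_{k=2}^d \ten{P}^{(k)}(1,i_k,j_k,1)$; substituting the stated cores gives $\sigma_i^2\,\delta_{i_1 j_1}\prod_{k=2}^d \delta_{i_k j_k} = \sigma_i^2\prod_{k=1}^d\delta_{i_k j_k}$, which is the $([i_1\cdots i_d],[j_1\cdots j_d])$ entry of $\sigma_i^2\,\mat{I}_{n^d}$ by the previous paragraph. Since the $l$ covariance matrices differ only through the scalars $\sigma_i^2$ while sharing the identical identity structure in all $d$ Kronecker slots, I would absorb $\sigma_i^2$ into the front core by setting $\ten{P}^{(1)}(i,:,:,1) = \sigma_i^2\,\mat{I}_n$ and let every remaining core be the single slice $\ten{P}^{(k)}(1,:,:,1) = \mat{I}_n$, independent of $i$. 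This reproduces the claimed cores and confirms that unit ranks suffice, completing the proof in parallel with the zero-matrix argument for Lemma \ref{lemma:M0}.

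I expect the only genuine obstacle to be bookkeeping rather than any conceptual difficulty: one must check that the index ordering used to reshape $\mat{I}_{n^d}$ into cores matches the preliminaries' convention of assigning index $i_k$ to core $\ten{P}^{(k)}$, and that packing the $l$ distinct scalings along the front index is consistent with the $l\times n\times n\times R_1$ layout. Here the shared identity structure is what makes the argument clean, since $\prod_k\delta_{i_k j_k}$ is invariant under either endianness of the multi-index, so the usual column-major versus Kronecker ordering subtleties never bite; and because every factor is identical, no contracted index can carry more than one value, which is exactly why the TN-ranks collapse to one.
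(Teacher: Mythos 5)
Your proof is correct and takes essentially the same route as the paper: the paper justifies Lemmas \ref{lemma:M0} and \ref{lemma:P0} by the single observation that a TN with all unit TN-ranks equals the outer (Kronecker) product of its cores, which is precisely the factorization $\sigma_i^2\,\mat{I}_{n^d} = \sigma_i^2\,\mat{I}_n \otimes \mat{I}_n \otimes \cdots \otimes \mat{I}_n$ that you verify entrywise via $\prod_{k=1}^d \delta_{i_k j_k}$. Your multi-index bookkeeping and the absorption of $\sigma_i^2$ into the $l$-indexed front core merely spell out the details the paper leaves implicit.
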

The total storage cost for $\ten{P}(0)$ is in this way reduced from $n^{2d}$ to $(l+d-1)n^2$. Lemma \ref{lemma:P0} can also be used to construct the tensor $\ten{Q}(t)$ directly into the TN format. The proofs of Lemmas \ref{lemma:M0} and \ref{lemma:P0} rely on the fact that a TN with all unit TN-ranks is equivalent with the outer product of the TN-cores.  Table \ref{tab:storage} compares the storage cost between the standard and the TN approach where the maximal TN-ranks of $\mat{M}(t),\ten{P}(t),\mat{A}(t),\mat{c}$ are denoted $r_M,r_P,r_A,r_c$, respectively. From Table \ref{tab:storage} one can see that using TNs transforms the storage cost from exponential into linear in the degree $d$. When the above assumptions do not hold, then alternative methods to initialize the TNs are required. These are the TT-SVD~\cite[p.~2301]{ivanTT} and TT-cross approximation~\cite[p.~80]{ttcross} algorithms. We now discuss the implementation of each step of the TN Kalman filter.
\begin{table}[tb]
\begin{center}
\caption{Storage cost for the standard and TN approach.}
\label{tab:storage}	
\begin{tabular}{@{}lcc@{}}
		 &  \multicolumn{2}{c}{Storage cost} \\
\midrule
 Storage		& Standard		& TN \\ \midrule
$\mat{M}(t)$ 		& $ln^d$ 	& $O((d-1)nr_M^2+lnr_M)$  \\
$\ten{P}(t)$ 		& $n^{2d}$ 	& $O((d-1)n^2r_P^2+ln^2r_P)$\\
$\ten{Q}(t)$ 		& $n^{2d}$ 	& $(l+d-1)n^2$\\
$\mat{A}(t)$ 		& $n^{2d}$ 	& $O((d-1)n^2r_A^2+lnr_A)$\\
$\mat{c}$ 			& $n^{d}$ 	& $O(dnr_c^2)$\\
\end{tabular}
\end{center}
\end{table}

\subsection{TN implementation of $\mat{M}^+ = \mat{M}(t-1) \times_1 \mat{A}(t-1)$}
\begin{figure}[tb]
\begin{center}
\input{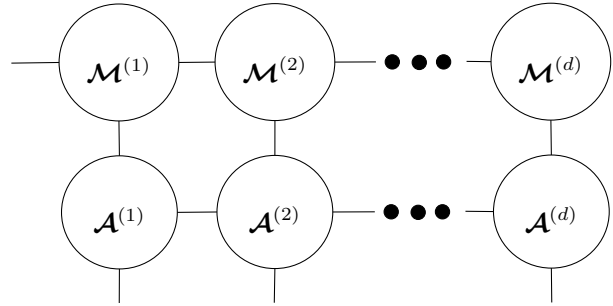}
\caption{The operation $\mat{M}(t-1) \times_1 \mat{A}(t-1)$ as a TN.}
\label{fig:TNMA}
\end{center}
\end{figure}
\begin{figure}[tb]
\begin{center}
\ifx\du\undefined
  \newlength{\du}
\fi
\setlength{\du}{4\unitlength}
\begin{tikzpicture}
\pgftransformxscale{1.000000}
\pgftransformyscale{-1.000000}
\definecolor{dialinecolor}{rgb}{0.000000, 0.000000, 0.000000}
\pgfsetstrokecolor{dialinecolor}
\definecolor{dialinecolor}{rgb}{1.000000, 1.000000, 1.000000}
\pgfsetfillcolor{dialinecolor}
\definecolor{dialinecolor}{rgb}{1.000000, 1.000000, 1.000000}
\pgfsetfillcolor{dialinecolor}
\pgfpathellipse{\pgfpoint{8.689118\du}{8.630829\du}}{\pgfpoint{5.659933\du}{0\du}}{\pgfpoint{0\du}{5.534254\du}}
\pgfusepath{fill}
\pgfsetlinewidth{0.100000\du}
\pgfsetdash{}{0pt}
\pgfsetdash{}{0pt}
\pgfsetmiterjoin
\definecolor{dialinecolor}{rgb}{0.000000, 0.000000, 0.000000}
\pgfsetstrokecolor{dialinecolor}
\pgfpathellipse{\pgfpoint{8.689118\du}{8.630829\du}}{\pgfpoint{5.659933\du}{0\du}}{\pgfpoint{0\du}{5.534254\du}}
\pgfusepath{stroke}
\definecolor{dialinecolor}{rgb}{0.000000, 0.000000, 0.000000}
\pgfsetstrokecolor{dialinecolor}
\node at (8.689118\du,9.317217\du){$\ten{M}^{+(1)}$};
\definecolor{dialinecolor}{rgb}{1.000000, 1.000000, 1.000000}
\pgfsetfillcolor{dialinecolor}
\pgfpathellipse{\pgfpoint{23.440918\du}{8.630829\du}}{\pgfpoint{5.659933\du}{0\du}}{\pgfpoint{0\du}{5.534254\du}}
\pgfusepath{fill}
\pgfsetlinewidth{0.100000\du}
\pgfsetdash{}{0pt}
\pgfsetdash{}{0pt}
\pgfsetmiterjoin
\definecolor{dialinecolor}{rgb}{0.000000, 0.000000, 0.000000}
\pgfsetstrokecolor{dialinecolor}
\pgfpathellipse{\pgfpoint{23.440918\du}{8.630829\du}}{\pgfpoint{5.659933\du}{0\du}}{\pgfpoint{0\du}{5.534254\du}}
\pgfusepath{stroke}
\definecolor{dialinecolor}{rgb}{0.000000, 0.000000, 0.000000}
\pgfsetstrokecolor{dialinecolor}
\node at (23.440918\du,9.317217\du){$\ten{M}^{+(2)}$};
\definecolor{dialinecolor}{rgb}{1.000000, 1.000000, 1.000000}
\pgfsetfillcolor{dialinecolor}
\pgfpathellipse{\pgfpoint{53.825975\du}{8.628515\du}}{\pgfpoint{5.657567\du}{0\du}}{\pgfpoint{0\du}{5.531941\du}}
\pgfusepath{fill}
\pgfsetlinewidth{0.100000\du}
\pgfsetdash{}{0pt}
\pgfsetdash{}{0pt}
\pgfsetmiterjoin
\definecolor{dialinecolor}{rgb}{0.000000, 0.000000, 0.000000}
\pgfsetstrokecolor{dialinecolor}
\pgfpathellipse{\pgfpoint{53.825975\du}{8.628515\du}}{\pgfpoint{5.657567\du}{0\du}}{\pgfpoint{0\du}{5.531941\du}}
\pgfusepath{stroke}
\definecolor{dialinecolor}{rgb}{0.000000, 0.000000, 0.000000}
\pgfsetstrokecolor{dialinecolor}
\node at (53.825975\du,9.314904\du){$\ten{M}^{+(d)}$};
\pgfsetlinewidth{0.100000\du}
\pgfsetdash{}{0pt}
\pgfsetdash{}{0pt}
\pgfsetbuttcap
{
\definecolor{dialinecolor}{rgb}{0.000000, 0.000000, 0.000000}
\pgfsetfillcolor{dialinecolor}
\definecolor{dialinecolor}{rgb}{0.000000, 0.000000, 0.000000}
\pgfsetstrokecolor{dialinecolor}
\draw (13.918214\du,10.748696\du)--(18.211821\du,10.748696\du);
}
\pgfsetlinewidth{0.100000\du}
\pgfsetdash{}{0pt}
\pgfsetdash{}{0pt}
\pgfsetbuttcap
{
\definecolor{dialinecolor}{rgb}{0.000000, 0.000000, 0.000000}
\pgfsetfillcolor{dialinecolor}
\definecolor{dialinecolor}{rgb}{0.000000, 0.000000, 0.000000}
\pgfsetstrokecolor{dialinecolor}
\draw (8.689118\du,14.165083\du)--(8.734084\du,17.538902\du);
}
\pgfsetlinewidth{0.100000\du}
\pgfsetdash{}{0pt}
\pgfsetdash{}{0pt}
\pgfsetbuttcap
{
\definecolor{dialinecolor}{rgb}{0.000000, 0.000000, 0.000000}
\pgfsetfillcolor{dialinecolor}
\definecolor{dialinecolor}{rgb}{0.000000, 0.000000, 0.000000}
\pgfsetstrokecolor{dialinecolor}
\draw (23.440918\du,14.165083\du)--(23.435884\du,17.502302\du);
}
\pgfsetlinewidth{0.100000\du}
\pgfsetdash{}{0pt}
\pgfsetdash{}{0pt}
\pgfsetbuttcap
{
\definecolor{dialinecolor}{rgb}{0.000000, 0.000000, 0.000000}
\pgfsetfillcolor{dialinecolor}
\definecolor{dialinecolor}{rgb}{0.000000, 0.000000, 0.000000}
\pgfsetstrokecolor{dialinecolor}
\draw (53.825975\du,14.160455\du)--(53.848220\du,17.017563\du);
}
\pgfsetlinewidth{0.100000\du}
\pgfsetdash{}{0pt}
\pgfsetdash{}{0pt}
\pgfsetbuttcap
{
\definecolor{dialinecolor}{rgb}{0.000000, 0.000000, 0.000000}
\pgfsetfillcolor{dialinecolor}
\definecolor{dialinecolor}{rgb}{0.000000, 0.000000, 0.000000}
\pgfsetstrokecolor{dialinecolor}
\draw (-0.707210\du,8.581130\du)--(3.029184\du,8.630829\du);
}
\pgfsetlinewidth{0.100000\du}
\pgfsetdash{}{0pt}
\pgfsetdash{}{0pt}
\pgfsetbuttcap
{
\definecolor{dialinecolor}{rgb}{0.000000, 0.000000, 0.000000}
\pgfsetfillcolor{dialinecolor}
\definecolor{dialinecolor}{rgb}{0.000000, 0.000000, 0.000000}
\pgfsetstrokecolor{dialinecolor}
\draw (13.918214\du,6.512961\du)--(18.211821\du,6.512961\du);
}
\pgfsetlinewidth{0.100000\du}
\pgfsetdash{}{0pt}
\pgfsetdash{}{0pt}
\pgfsetbuttcap
{
\definecolor{dialinecolor}{rgb}{0.000000, 0.000000, 0.000000}
\pgfsetfillcolor{dialinecolor}
\definecolor{dialinecolor}{rgb}{0.000000, 0.000000, 0.000000}
\pgfsetstrokecolor{dialinecolor}
\draw (28.670014\du,6.512961\du)--(33.524595\du,6.512961\du);
}
\pgfsetlinewidth{0.100000\du}
\pgfsetdash{}{0pt}
\pgfsetdash{}{0pt}
\pgfsetbuttcap
{
\definecolor{dialinecolor}{rgb}{0.000000, 0.000000, 0.000000}
\pgfsetfillcolor{dialinecolor}
\definecolor{dialinecolor}{rgb}{0.000000, 0.000000, 0.000000}
\pgfsetstrokecolor{dialinecolor}
\draw (43.982787\du,6.512961\du)--(48.599065\du,6.511533\du);
}
\pgfsetlinewidth{0.100000\du}
\pgfsetdash{}{0pt}
\pgfsetdash{}{0pt}
\pgfsetbuttcap
{
\definecolor{dialinecolor}{rgb}{0.000000, 0.000000, 0.000000}
\pgfsetfillcolor{dialinecolor}
\definecolor{dialinecolor}{rgb}{0.000000, 0.000000, 0.000000}
\pgfsetstrokecolor{dialinecolor}
\draw (28.670014\du,10.748696\du)--(33.524595\du,10.748696\du);
}
\pgfsetlinewidth{0.100000\du}
\pgfsetdash{}{0pt}
\pgfsetdash{}{0pt}
\pgfsetbuttcap
{
\definecolor{dialinecolor}{rgb}{0.000000, 0.000000, 0.000000}
\pgfsetfillcolor{dialinecolor}
\definecolor{dialinecolor}{rgb}{0.000000, 0.000000, 0.000000}
\pgfsetstrokecolor{dialinecolor}
\draw (43.982787\du,10.748696\du)--(48.599065\du,10.745497\du);
}
\pgfsetlinewidth{0.100000\du}
\pgfsetdash{}{0pt}
\pgfsetdash{}{0pt}
\pgfsetmiterjoin
\definecolor{dialinecolor}{rgb}{1.000000, 1.000000, 1.000000}
\pgfsetstrokecolor{dialinecolor}
\pgfpathellipse{\pgfpoint{38.753691\du}{8.630829\du}}{\pgfpoint{5.659933\du}{0\du}}{\pgfpoint{0\du}{5.534254\du}}
\pgfusepath{stroke}
\definecolor{dialinecolor}{rgb}{0.000000, 0.000000, 0.000000}
\pgfsetstrokecolor{dialinecolor}
\node at (38.753691\du,9.317217\du){};
\definecolor{dialinecolor}{rgb}{0.000000, 0.000000, 0.000000}
\pgfsetfillcolor{dialinecolor}
\pgfpathellipse{\pgfpoint{36.584249\du}{8.494143\du}}{\pgfpoint{0.675000\du}{0\du}}{\pgfpoint{0\du}{0.675000\du}}
\pgfusepath{fill}
\pgfsetlinewidth{0.100000\du}
\pgfsetdash{}{0pt}
\pgfsetdash{}{0pt}
\definecolor{dialinecolor}{rgb}{0.000000, 0.000000, 0.000000}
\pgfsetstrokecolor{dialinecolor}
\pgfpathellipse{\pgfpoint{36.584249\du}{8.494143\du}}{\pgfpoint{0.675000\du}{0\du}}{\pgfpoint{0\du}{0.675000\du}}
\pgfusepath{stroke}
\definecolor{dialinecolor}{rgb}{0.000000, 0.000000, 0.000000}
\pgfsetfillcolor{dialinecolor}
\pgfpathellipse{\pgfpoint{39.134249\du}{8.494143\du}}{\pgfpoint{0.675000\du}{0\du}}{\pgfpoint{0\du}{0.675000\du}}
\pgfusepath{fill}
\pgfsetlinewidth{0.100000\du}
\pgfsetdash{}{0pt}
\pgfsetdash{}{0pt}
\definecolor{dialinecolor}{rgb}{0.000000, 0.000000, 0.000000}
\pgfsetstrokecolor{dialinecolor}
\pgfpathellipse{\pgfpoint{39.134249\du}{8.494143\du}}{\pgfpoint{0.675000\du}{0\du}}{\pgfpoint{0\du}{0.675000\du}}
\pgfusepath{stroke}
\definecolor{dialinecolor}{rgb}{0.000000, 0.000000, 0.000000}
\pgfsetfillcolor{dialinecolor}
\pgfpathellipse{\pgfpoint{41.434249\du}{8.494143\du}}{\pgfpoint{0.675000\du}{0\du}}{\pgfpoint{0\du}{0.675000\du}}
\pgfusepath{fill}
\pgfsetlinewidth{0.100000\du}
\pgfsetdash{}{0pt}
\pgfsetdash{}{0pt}
\definecolor{dialinecolor}{rgb}{0.000000, 0.000000, 0.000000}
\pgfsetstrokecolor{dialinecolor}
\pgfpathellipse{\pgfpoint{41.434249\du}{8.494143\du}}{\pgfpoint{0.675000\du}{0\du}}{\pgfpoint{0\du}{0.675000\du}}
\pgfusepath{stroke}
\end{tikzpicture}
\caption{The TN for $\mat{M}(t-1) \times_1 \mat{A}(t-1)$ after contraction of the indices between $\ten{M}^{(k)}$ and $\ten{A}^{(k)} \,(k=1,\ldots,d)$.}
\label{fig:TNMA2}
\end{center}
\end{figure}
The computation of $\mat{M}^+ = \mat{M}(t-1) \times_1 \mat{A}(t-1)$ in the TN representation is shown in Figure \ref{fig:TNMA}. Note the extra index of the first border core $\ten{M}^{(1)}$, which runs from 1 to $l$. There are many ways to contract the TN of Figure \ref{fig:TNMA} and the total computational complexity depends on the chosen order in which the contractions happen~\cite[p.~126]{TNorus}. For the TN Kalman filter however, the order of contractions is fixed since the resulting TN needs to consist of $d$ TN-cores. This implies that the only allowable contractions are those of $\ten{M}^{(k)}$ with the corresponding $\ten{A}^{(k)}$-core. This contraction is  
\begin{align}
m^{+(k)}_{\alpha_{k-1}\beta_{k-1}j_k\alpha_k\beta_k} &=  \sum_{i_k} m^{(k)}_{\alpha_{k-1}i_k\alpha_k} \; a^{(k)}_{\beta_{k-1}j_ki_k\beta_k},
\label{eq:mplus}
\end{align}
with a computational complexity of $O(r_M^2n^2r_A^2)$. Note that we have permuted the indices after the contraction into the order $\alpha_{k-1}\beta_{k-1}j_k\alpha_k\beta_k$. In this way, \eqref{eq:mplus} is consistent with the contractions as shown in Figure \ref{fig:TNMA2}. The horizontal contractions between consecutive $\ten{M}^{(k)}$ and $\ten{A}^{(k)}$ cores now appear as double contractions between consecutive $\ten{M}^{+(k)}$ cores, which implies that the TN-rank $r_M$ has increased to $r_Mr_A$. This increase of the TN-rank can also be understood from considering both $\alpha_{k-1}\beta_{k-1}$ and $\alpha_k\beta_k$ as multi-indices. The TN-ranks will therefore grow exponentially during the filtering if no measures are taken. Fortunately, it is possible to reduce the TN-ranks without the loss of accuracy by the process of TN-rounding. The rounding procedure for TTs is described in \cite[p.~2301-2305]{ivanTT} and is easily adapted to work for TNs. It involves a right-to-left ``sweep" of QR decompositions over all cores $\ten{M}^{+(k)}$, followed by a left-to-right sweep of singular value decompositions (SVDs). The computational complexity of the whole rounding process is $O(dnr^3)$, where $r$ is the maximal TN-rank. The use of a truncated SVD in the rounding algorithm results in a further reduction of the TN-ranks, at the expense of obtaining only an approximated TN. This can, however, drastically reduce the computation time, as is demonstrated in the experiments in Section \ref{sec:experiments}.

\subsection{TN implementation of $\ten{P}^+= \ten{P}(t-1) \times_1 \mat{A}(t-1) \times_2 \mat{A}(t-1) + \ten{Q}(t-1)$}
\begin{figure}[tb]
\begin{center}
\input{figures/TNPAA.tex}
\caption{The operation $\ten{P}(t-1) \times_1 \mat{A}(t-1) \times_2 \mat{A}(t-1)$ as a TN.}
\label{fig:TNPAA}
\end{center}
\end{figure}
The computation of the predicted covariance matrices consists of two steps. First, there is the contraction $\ten{P}(t-1) \times_1 \mat{A}(t-1) \times_2 \mat{A}(t-1)$ given by
\begin{align}
\nonumber &p^{+(k)}_{\alpha_{k-1}\beta_{k-1}\gamma_{k-1}j_1i_2\alpha_k\beta_k\gamma_k}   \\ 
&=\sum_{i_1,j_2} p^{(k)}_{\alpha_{k-1}i_1j_2\alpha_k} \; a^{(k)}_{\beta_{k-1}i_1j_1\beta_k} \; a^{(k)}_{\gamma_{k-1}i_2j_2\gamma_k},
\label{eq:pplus}
\end{align}
and shown as a TN in Figure \ref{fig:TNPAA}. Again, since we want to obtain the result as a TN of $d$ cores, the order of the contractions is uniquely determined with a computational complexity of $O(r_P^2n^3r_A^4)$. Note that also here the indices of $\ten{P}^{+(k)}$ in \eqref{eq:pplus} are permuted into the order $\alpha_{k-1}\beta_{k-1}\gamma_{k-1}j_1i_2\alpha_k\beta_k\gamma_k$. The TN-ranks $r_P$ are now increased to $r_Pr_A^2$, which means that another rounding step is required. The $d$ cores obtained from the contractions then need to be added to the corresponding $d$ cores of $\ten{Q}(t)$. The addition of tensors in the TT representation is described in \cite[p.~2308]{ivanTT} and its extension is also straightforward for the TN case. It entails the concatenation of the respective cores of the two summands, which means that the TN-ranks of corresponding cores are summed. An additional rounding step is hence required.

\subsection{TN implementation of $\mat{v} = \mat{y}(t) - \mat{M}^+ \times_1 \mat{c}(t)$}
The dominating computational step in the computation of $\mat{v}$ is the contraction $\mat{M}^+ \times_1 \mat{c}(t)$, which in the TN format is
\begin{align*}
\sum_{i_1} m^{+(k)}_{\alpha_{k-1}i_1\alpha_k}\, c^{(k)}_{\beta_{k-1}i_1\beta_{k}}.
\end{align*}
The computational complexity of each core contraction is $O(r_M^2\,r_c^2\,n)$. If the TN-ranks are too large one can at this point apply the rounding procedure. After rounding the whole TN is then contracted, resulting in a vector of length $l$ that can be subtracted from $\mat{y}(t)$.

\subsection{TN implementation of $\mat{s} = \ten{P}^+\times_1 \mat{c}(t) \times_2 \mat{c}(t) + \textrm{diag}(\mat{R}(t))$}
This step involves the contraction of the covariance tensor $\ten{P}^{+}$ with the vector $\mat{c}(t)$ on its first two modes, resulting in a vector of length $l$. In terms of the TN representation, this is achieved through the contraction
\begin{align*}
\sum_{i_1,j_1} p^{+(k)}_{\alpha_{k-1}i_1j_1\alpha_k} \; c^{(k)}_{\beta_{k-1}i_1\beta_k} \; c^{(k)}_{\gamma_{k-1}j_1\gamma_k},
\end{align*}
for each of the $d$ cores with a computational complexity of $O(r_P^2\,r_c^2\,n^2+r_P^2\,r_c^4\,n)$. After rounding and contracting the resulting TN one obtains a row vector of length $l$, which is added with the diagonal entries of the matrix $\mat{R}(t)$.

\subsection{TN implementation of $\mat{K} = \ten{P}^+\times_2 \mat{c}(t) \times_3 \textrm{diag}(\mat{s})^{-1}$}
The contraction $\ten{P}^+\times_2 \mat{c}(t)$ in the TN format is for all cores except the first
\begin{align*}
\sum_{j_1} p^{+(k)}_{\alpha_{k-1}i_1j_1\alpha_k} c^{(k)}_{\beta_{k-1}j_1\beta_k}
\end{align*}
and has a computational complexity of $O(r_P^2\,r_c^2\,n^2)$. If we denote the $l \times l$ matrix $\textrm{diag}(\mat{s})^{-1}$ by $\mat{S}$, then the scaling operation $\times_3\, \mat{S}$ corresponds with a contraction on only the first core $\ten{P}^{(1)}$ such that all contractions on it are
\begin{align*}
\sum_{j_1,\alpha_{0}} p^{+(1)}_{\alpha_0i_1j_1\alpha_1}\; c^{(1)}_{\beta_{0}j_1\beta_1} \; s_{\alpha_0\gamma_{1}}.
\end{align*}
An additional rounding step can be applied to reduce the TN-ranks of the resulting Kalman gain TN.

\subsection{TN implementation of $\mat{M}(t) = \mat{M}^+ + \mat{K}\times_2 \textrm{diag}(\mat{v})$}
The updated mean matrix $\mat{M}(t)$ is obtained from adding $\mat{M}^+$ with a scaled version of the Kalman gain. The scaling of the Kalman gain in the TN format is performed in an identical way as the scaling $\times_3\, \mat{S}$ from the previous step with a computational complexity of $O(l^2nr_K)$. All TN-cores $\ten{K}^{(k)}$ are then concatenated with the cores $\ten{M}^{+(k)}$ to obtain the TN representation of $\mat{M}(t)$. The concatenation adds all corresponding TN-ranks together, which implies that a rounding step is required.

\subsection{TN implementation of $\ten{P}(t) = \ten{P}^+ - (\mat{K}\,  \square \, \mat{K})\times_3 \textrm{diag}(\mat{s})$}
In this step the column-wise outer product of the Kalman gain matrix $\mat{K}$ with itself needs to be computed in the TN format. The following lemma describes how this can be done.
\begin{lemma}
Let $\mat{K}_1$ be the $l\times nr_K$ matrix obtained from reshaping the $\ten{K}^{(1)}$ core, then the first TN core of $\mat{K}\,  \square \, \mat{K}$ is obtained by the following procedure:
\begin{enumerate}
\item compute the matrix $\mat{K}_{11}:= \mat{K}_1^T \odot \mat{K}_1^T$,
\item reshape $\mat{K}_{11}$ into a $n \times r_K \times  n \times r_K \times l$ tensor $\ten{K}_{11}$,
\item permute $\ten{K}_{11}$ into a $l \times n  \times n \times r_K \times r_K$ tensor $\tilde{\ten{K}}_{11}$,
\item reshape $\tilde{\ten{K}}_{11}$ into the desired $l \times n^2  \times  r_K^2$ tensor core.
\end{enumerate}
The remaining $d-1$ cores are $\ten{K}^{(k)} \otimes \ten{K}^{(k)} \, (k=2,\ldots,d)$.
\end{lemma}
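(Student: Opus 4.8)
The plan is to prove the lemma one frontal slice at a time and then show how the four listed operations glue the $l$ slices of the first core together. First I would fix the free index $k\in\{1,\ldots,l\}$. By the definition of the column-wise outer product, the $k$th frontal slice of $\mat{K}\,\square\,\mat{K}$ equals $\mat{K}(:,k)\circ\mat{K}(:,k)$, the outer product of the $k$th column of $\mat{K}$ with itself. For this fixed $k$ the column $\mat{K}(:,k)$ is itself a TT whose first core is the matrix $\ten{K}^{(1)}(k,:,:)$ and whose remaining cores are $\ten{K}^{(2)},\ldots,\ten{K}^{(d)}$. Reshaping the $n^d\times n^d$ matrix $\mat{K}(:,k)\circ\mat{K}(:,k)$ into a $2d$-way tensor and pairing its modes as $[i_1j_1]\cdots[i_dj_d]$ turns it, by Definition \ref{def:tkron}, into the tensor Kronecker product of $\mat{K}(:,k)$ with itself.

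Next I would use the standard fact \cite{tkpsvd,Phan2012} that the tensor Kronecker product of two tensors presented in TT format is again representable with $d$ cores, each being the tensor Kronecker product of the corresponding cores of the two factors. Applied to the $k$th slice, this produces a first core $\ten{K}^{(1)}(k,:,:)\otimes\ten{K}^{(1)}(k,:,:)$ together with cores $\ten{K}^{(m)}\otimes\ten{K}^{(m)}$ for $m=2,\ldots,d$. Crucially, these latter cores carry no dependence on $k$, so they are identical across all $l$ slices and may be shared; this is exactly the claim that the remaining $d-1$ cores of $\mat{K}\,\square\,\mat{K}$ are $\ten{K}^{(m)}\otimes\ten{K}^{(m)}$.

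The only core that genuinely depends on $k$ is the first one, and assembling its $l$ slices into a single $l\times n^2\times r_K^2$ tensor is what the four-step recipe accomplishes. For fixed $k$ that slice has entry $\ten{K}^{(1)}(k,i_1,\alpha_1)\,\ten{K}^{(1)}(k,j_1,\beta_1)$. Reshaping $\ten{K}^{(1)}$ into the $l\times nr_K$ matrix $\mat{K}_1$ gives $\mat{K}_1(k,[i_1\alpha_1])=\ten{K}^{(1)}(k,i_1,\alpha_1)$, so the $k$th column of $\mat{K}_1^T$ is exactly the vectorized first-core slice for that $k$. The column-wise Kronecker product $\mat{K}_{11}=\mat{K}_1^T\odot\mat{K}_1^T$ therefore has $k$th column $\mat{K}_1^T(:,k)\otimes\mat{K}_1^T(:,k)$, whose entries, indexed by the doubled multi-index $[[i_1\alpha_1][j_1\beta_1]]$, are precisely the products $\ten{K}^{(1)}(k,i_1,\alpha_1)\,\ten{K}^{(1)}(k,j_1,\beta_1)$ required above. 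I would then check that the reshape in step 2, the permutation in step 3, and the reshape in step 4 merely regroup this within-column index into the order $([i_1j_1],[\alpha_1\beta_1])$ and move the free index $k$ to the front, reproducing the first core of the slice-wise tensor Kronecker product.

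I expect the main obstacle to be index bookkeeping rather than any deeper difficulty: one must verify that the index ordering generated by the Matlab-style reshapes and the single permutation in steps 2--4 is consistent with the convention of Definition \ref{def:tkron} used for cores $2,\ldots,d$, so that the $d$ assembled cores compose into one valid network for $\mat{K}\,\square\,\mat{K}$. The conceptual reason the Khatri-Rao product is the correct device is that it performs the Kronecker product separately for each value of the free index $k$, which is exactly the per-column structure of the column-wise outer product and is what prevents the $l$ distinct slices from being mixed together.
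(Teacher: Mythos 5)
Your proposal is correct, and the mathematics underneath it is the same as in the paper: everything reduces to the mixed product property of the Kronecker product. The organization, however, is genuinely different. The paper fixes \emph{all} free indices at once --- the row multi-index $[i_1\cdots i_d]$, the column multi-index $[i_{d+1}\cdots i_{2d}]$, and the third-mode index $j$ --- and verifies entrywise that the proposed network contracts to $k_{[i_1\cdots i_d]j}\,k_{[i_{d+1}\cdots i_{2d}]j}$, by writing the contraction as a product of Kronecker products of core slices, $(\mat{k}^{(1)}_{ji_1}\otimes\mat{k}^{(1)}_{ji_{d+1}})\,(\mat{K}^{(2)}_{i_2}\otimes\mat{K}^{(2)}_{i_{d+2}})\cdots(\mat{k}^{(d)}_{i_d}\otimes\mat{k}^{(d)}_{i_{2d}})$, and collapsing it with the mixed product rule. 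You instead fix only the slice index $k$, recognize the $k$th frontal slice $\mat{K}(:,k)\circ\mat{K}(:,k)$ as the tensor Kronecker product of the TT of $\mat{K}(:,k)$ with itself, and then invoke as a black box the fact that the TT cores of a tensor Kronecker product are the tensor Kronecker products of the corresponding cores \cite{tkpsvd,Phan2012} --- a fact whose proof \emph{is} the paper's mixed-product computation. What your route buys is modularity and, notably, a more explicit treatment of the first core: you actually check that the Khatri-Rao product $\mat{K}_1^T\odot\mat{K}_1^T$ produces, column by column, the vectorized per-slice Kronecker products $\ten{K}^{(1)}(k,:,:)\otimes\ten{K}^{(1)}(k,:,:)$, whereas the paper simply asserts that fixing indices of the first core ``results in'' $\mat{k}^{(1)}_{ji_1}\otimes\mat{k}^{(1)}_{ji_{d+1}}$, leaving the four-step reshape/permute recipe unjustified. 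What the paper's route buys is self-containedness: it needs no external lemma and the whole verification is a single displayed calculation. The one loose end you flag yourself --- consistency of the Matlab-style reshapes in steps 2--4 with the index-ordering convention of Definition \ref{def:tkron} --- is harmless here: the ordering convention only decides which factor's index runs fast within each paired mode, and since both Kronecker factors are the same tensor and the same convention is applied in every core, any consistent choice yields a valid network for $\mat{K}\,\square\,\mat{K}$.
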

\begin{proof}
If we fix the row indices $i_1,\ldots,i_d$ and column indices  $i_{d+1},\ldots,i_{2d}$ of $(\mat{K}\,  \square \, \mat{K})$ and fix the index $j$ for its third mode, then we have that
\begin{align*}
(\mat{K}\,  \square \, \mat{K})_{[i_1\cdots i_d][i_{d+1}\cdots i_{2d}]j} &= k_{[i_1\cdots i_d]j}\, k_{[i_{d+1}\cdots i_{2d}]j}.
\end{align*}
In the TN representation fixing these indices results in
\begin{align}
(\mat{k}^{(1)}_{ji_1} \otimes \mat{k}^{(1)}_{ji_{d+1}}) \, (\mat{K}^{(2)}_{i_2} \otimes \mat{K}^{(2)}_{i_{d+2}}) \cdots (\mat{k}^{(d)}_{i_d} \otimes \mat{k}^{(d)}_{i_{2d}}),
\label{eqn:Kproof}
\end{align}
where $\mat{k}^{(1)}_{ji_1}$ denotes the row vector obtained from fixing the indices of the first two modes of $\ten{K}^{(1)}$. Similarly, $\mat{K}^{(2)}_{i_2}$ denotes the matrix obtained from fixing the second index of $\ten{K}^{(2)}$. Fixing the second index of $\ten{K}^{(d)}$ results in a column vector $\mat{k}^{(d)}_{i_d}$. The Kronecker products of these vectors and matrices in \eqref{eqn:Kproof} can be rewritten using the mixed product property of the Kronecker product into
\begin{align*}
&({\mat{k}}^{(1)}_{ji_1}\, \mat{K}^{(2)}_{i_2} \cdots \mat{k}^{(d)}_{i_d})  \otimes (\mat{k}^{(1)}_{ji_{d+1}} \, \mat{K}^{(2)}_{i_{d+2}} \cdots  \mat{K}^{(d)}_{i_{2d}}),\\
&= k_{[i_1\cdots i_d]j}\, k_{[i_{d+1}\cdots i_{2d}]j},
\end{align*}
which concludes the proof.
\end{proof}
The tensor Kronecker product sets the computational complexity of computing each core to $r_K^4n^2$. The scaling of $\mat{K}\,  \square \, \mat{K}$ with $\mat{S}$ is exactly the same as the scaling of the Kalman gain, which is via a contraction of the first core. After the result of the scaling is added to $\ten{P}^{+}$, an additional rounding step can be applied to reduce the TN-ranks. A complete overview of the computational complexity of each of the TN Kalman filter steps computed in both the standard and TN ways is given in Table \ref{tab:comp}. From this table one can see that using TNs transforms the computational complexity from exponential to linear in the degree $d$. The importance of the rounding procedure also becomes clear since the computational complexity is polynomial in the TN-ranks $r_M,r_P,r_K,r_A,r_c$. Having derived the TN Kalman filter, we now move on to discuss its application in the recursive identification of MIMO Volterra systems.
\begin{table}[tb]
\begin{center}
\caption{Computational complexity of the standard and TN approach.}
\label{tab:comp}	
\begin{tabular}{@{}lcc@{}}
		 &  \multicolumn{2}{c}{Computational complexity (flops)} \\
\midrule
 Computation		& Standard		& TN \\ \midrule
$\mat{M}^{+}$ 		& $O(ln^{2d})$ 		& $O(d\,r_M^2\,r_A^2\,n^{2})$  \\
$\ten{P}^{+}$ 		& $O(ln^{3d})$ 		& $O(d\,r_P^2\,r_A^4\,n^{3})$\\
$\mat{v}$ 			& $O(ln^d)$ 		& $O(d\,r_M^2\,r_c^2\,n)$ \\
$\mat{s}$ 			& $O(ln^{2d})$ 		& $O(d(r_P^2\,r_c^2\,n^2+r_P^2\,r_c^4\,n))$\\
$\mat{K}$			& $O(ln^{2d})$		& $O(d\,r_P^2\,r_c^2\,n^2)$\\
$\mat{M}(t)$ 		& $O(ln^{d})$		& $O(l^2\,r_K\,n)$\\
$\ten{P}(t)$ 		& $O(ln^{2d})$	& $O(d\,r_K^4\,n^2)$
\end{tabular}
\end{center}
\end{table}

\section{Recursive MIMO Volterra system identification}
\label{sec:VolterraTensor}
The first required step in applying a TN Kalman filter to the recursive identification of the Volterra kernel coefficients is writing the Volterra system as a TN. What follows is a brief review of the TN representation of a discrete-time MIMO Volterra system presented in \cite{MVMALS}. A discrete-time $p$-input $l$-output Volterra system of degree $d$ and memory $M$ is described by
\begin{align}
\mat{y}(t) &:= \begin{pmatrix} y_1(t) & y_2(t)& \cdots& y_l(t) \end{pmatrix} = (\mat{u}_t\kpr{d})^T \mat{V} \; ,
\label{eq:defMIMOVolterra}
\end{align}
where the vector
\begin{align*}
\mat{u}_t := \begin{pmatrix}1 & u_1(t) & u_2(t) & \cdots & u_p(t-M+1) \end{pmatrix}^T \in \mathbb{R}^{pM+1}
\end{align*}
contains all $p$ input values at times $t$ down to $t-M+1$ and $\mat{u}_t\kpr{d}$ denotes its $d$-times repeated Kronecker product. Each column of the $(pM+1)^d \times l$ matrix $\mat{V}$ contains all coefficients from the Volterra kernels of degree $0$ up to degree $d$ for a particular output. The repeated Kronecker product structure of $\mat{u}_t$ can be exploited to rewrite \eqref{eq:defMIMOVolterra} as
\begin{align}
\mat{y}(t) &= \ten{V} \times_2 \mat{u}_t^T \times_3 \mat{u}_t^T \cdots \times_{d+1} \mat{u}_t^T,
\label{eq:MIMOVolterraTensor}
\end{align}
where $\ten{V}$ is the tensor obtained from reshaping $\mat{V}^T$ in \eqref{eq:defMIMOVolterra} into an $l \times (pM+1) \times \cdots \times (pM+1)$ tensor. The MIMO Volterra tensor $\ten{V}$ consists of $l\,(pM+1)^d$ entries, which quickly becomes infeasible to store even for moderately large values of $p,M$ and $d$. We therefore represent the Volterra tensor $\ten{V}$ as a TN $\ten{V}^{(1)},\ldots,\ten{V}^{(d)}$, where the first TN-core $\ten{V}^{(1)}$ has dimensions $l \times (pM+1) \times r_1$. The other TN-cores have sizes $r_{i-1} \times (pM+1) \times r_i$, with $r_d:=1$ for the last core. The TN reduces to a TT for the MISO case $(l=1)$. This change in representation reduces the storage requirement from $l\,(pM+1)^d$ to $O((d-1)(pM+1)r^2+(pM+1)lr)$. In \cite{MVMALS} it is also described how the $l$ output samples at time $t$ can be all be computed at once in the TN format as
\begin{align}
\mat{y}(t) &= (\ten{V}^{(1)} \times_2 \mat{u}_t^T) \, (\ten{V}^{(2)} \times_2 \mat{u}_t^T)  \cdots (\ten{V}^{(d)} \times_2 \mat{u}_t^T),
\label{eq:simTT}
\end{align}
with a computational complexity of $O(d(pM+1)r+dr^3)$. The corresponding TN diagram is shown in Figure \ref{fig:MIMOVolterraTN} and is essentially the TN representation of a discrete-time MIMO Volterra system. The connecting branches between the $\mat{u}_t$ nodes in Figure \ref{fig:MIMOVolterraTN} represent unit TN-ranks and correspond with $d$ outer products of the $\mat{u}_t$ vector with itself, resulting in a $d$-way tensor $\ten{U}$. An alternative way of writing \eqref{eq:simTT} is therefore
\begin{align*}
y_{i_1}(t) &= \sum_{i_2,i_3,\cdots ,i_{d+1}} \, v_{i_1i_2i_3\cdots i_{d+1}} \; u_{i_2i_3\cdots i_{d+1}}.
\end{align*}
where the summation runs over all repeated indices $i_2,i_3,i_4,i_5,\ldots,i_{2d-1},i_{2d}$. The only index that remains after the contraction is $1 \leq i_1 \leq l$, which runs over all entries of the $\mat{y}(t)$ vector. 
\begin{figure}[tb]
\begin{center}
\input{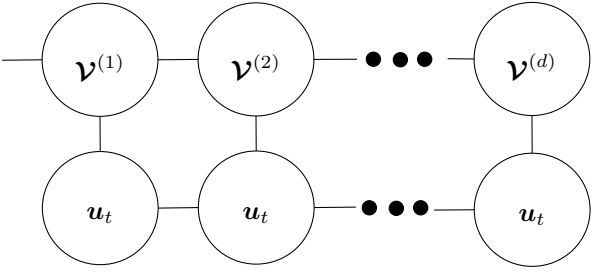}
\caption{A TN representation of a discrete-time MIMO Volterra system.}
\label{fig:MIMOVolterraTN}
\end{center}
\end{figure}
Next, we reformulate the MIMO Volterra system as a linear state space system such that the system identification problem can be recursively solved using a Kalman filter in the TN format. In \cite{Barner2006}, a time-varying discrete-time SISO Volterra system is described as a linear state space model, which can be extended to the MIMO case as
\begin{align}
\nonumber \mat{V}(t+1) &= \mat{A}(t)\, \mat{V}(t) + \mat{W}(t),\\ 
\mat{y}(t) &= (\mat{u}_t\kpr{d})^T \mat{V}(t)  + \mat{r}(t).
\end{align}
The row vector $\mat{c}$ in \eqref{eqn:linearstatespace} is for this particular case replaced with $(\mat{u}_t\kpr{d})^T$. The Volterra system is time-invariant when $\mat{A}(t) = \mat{I}$ and $\mat{W}(t)=\mat{O}$. If $\mat{W}(t)$ is a nonzero matrix instead, then the Volterra kernel coefficients follow a random walk. As mentioned earlier, the repeated Kronecker product structure of $\mat{u}_t$ gives rise to a tensor $\ten{U}$ with the following TN representation.
\begin{lemma}
All $d$ TN-cores $\ten{U}^{(1)},\ldots,\ten{U}^{(d)}$ of $\ten{U}$ have unit TN-ranks with
\begin{align*}
\ten{U}^{(k)}(1,:,1) &= \mat{u}_t \quad (k=1,\ldots,d).
\end{align*}
\label{lemma:ut}
\end{lemma}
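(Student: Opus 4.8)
The plan is to reduce the claim to the structural fact---already invoked for Lemmas~\ref{lemma:M0} and~\ref{lemma:P0}---that a TN whose contracted indices all have dimension one represents exactly the outer product of its cores. The tensor $\ten{U}$ is, by construction, the $d$-fold outer product $\mat{u}_t \circ \mat{u}_t \circ \cdots \circ \mat{u}_t$, since this is precisely what the repeated Kronecker product structure of $\mat{u}_t\kpr{d}$ produces after reshaping. It therefore suffices to exhibit a TT with unit TN-ranks whose cores, when combined by the outer product, reproduce this tensor, and to check that the proposed cores do the job.

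First I would specialise the cores. With every TN-rank equal to one, each core $\ten{U}^{(k)}$ collapses to a $1 \times (pM+1) \times 1$ object, i.e.\ effectively a vector of length $pM+1$; setting $\ten{U}^{(k)}(1,:,1) = \mat{u}_t$ as claimed makes the $k$th core simply $\mat{u}_t$. Invoking the unit-rank equivalence, the tensor represented by the network has entries
\begin{align*}
u_{i_1 i_2 \cdots i_d} = \prod_{k=1}^{d} u^{(k)}_{1 i_k 1} = \prod_{k=1}^{d} (\mat{u}_t)_{i_k},
\end{align*}
where the trivial rank indices carry no summation. This is exactly the $(i_1,\ldots,i_d)$ entry of $\mat{u}_t \circ \cdots \circ \mat{u}_t = \ten{U}$, so the two tensors agree entrywise and the representation is exact with all TN-ranks equal to one.

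The only point requiring care---and the step I expect to be the main, though minor, obstacle---is the index bookkeeping that links the repeated Kronecker product $\mat{u}_t\kpr{d}$ to the outer-product tensor $\ten{U}$ and then to the per-core assignment $i_k \mapsto \ten{U}^{(k)}$. Concretely, one must confirm that the linear index of $\mat{u}_t\kpr{d}$, read through the multi-index convention $[i_1 i_2 \cdots i_d]$ fixed in Section~\ref{sec:prelim}, matches the index ordering induced by assigning each factor of the outer product to a distinct TT core. This is the same identification already used implicitly when passing from \eqref{eq:defMIMOVolterra} to \eqref{eq:MIMOVolterraTensor}, so once that convention is in place the entrywise verification above closes the proof immediately.
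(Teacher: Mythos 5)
Your proof is correct and follows exactly the reasoning the paper relies on: it states Lemma~\ref{lemma:ut} without a formal proof, justifying it by the same two facts you use, namely that $\ten{U}$ is the $d$-fold outer product of $\mat{u}_t$ with itself and that a TN with all unit TN-ranks is equivalent to the outer product of its cores (the fact already invoked for Lemmas~\ref{lemma:M0} and~\ref{lemma:P0}). Your added entrywise verification and the remark that the multi-index bookkeeping is harmless (since all Kronecker factors are identical) are sound and only make the paper's implicit argument explicit.
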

The TN for $\ten{U}$ therefore consists of $\mat{u}_t$ repeated $d$ times, with a total storage cost of $O(pM+1)$ since we only need to store the $\mat{u}_t$ vector once. The fact that all its TN-ranks are equal to 1 for any time instance $t$ is very fortunate as this significantly reduces both the storage cost and computational complexity. Indeed, one can set $r_c:=1$ and $n:=(pM+1)$ in Tables \ref{tab:storage} and \ref{tab:comp} to obtain the storage cost and computational complexity for each step of the recursive system identification with the TN Kalman filter.

\section{Application}
\label{sec:experiments}
We demonstrate the effectiveness of the proposed TN Kalman filter through its application to the system identification problem of MIMO Volterra systems. All computations were performed in Matlab on an Intel i5 quad-core processor running at 3.3 GHz with 16 GB RAM. All TN computations are implemented in Matlab/Octave and our algorithms are freely available from \url{https://github.com/kbatseli/TNKalman}.

\subsection{Degree-4 SISO time-invariant Volterra system}
As a first experiment we compare the use of a standard Kalman filter with the TN Kalman filter to confirm its correctness. We also demonstrate the effect of using a truncated SVD in the rounding process on the runtime of the filter. The applicability of the standard Kalman filter is limited by the length of the state vector. We therefore consider the following time-invariant SISO Volterra system
\begin{align*}
\nonumber \mat{v}(t+1)\kpr{4} &= \mat{v}(t)\kpr{4} ,\\ 
y(t) &= (\mat{u}_t\kpr{4})^T \,  \mat{v}(t)\kpr{4} + r(t),
\end{align*}
where $\mat{v}(0) \in \mathbb{R}^5$ and each entry is drawn from a standard normal distribution. The measurement noise is described by $r(t) \sim \textrm{N}(0,10^{-2})$. The memory $M$ is set to 4 and all input samples $u(t)$ are also drawn from a standard normal distribution. This linear state space system corresponds with a degree-4 Volterra system. Since the number of outputs $l$ is 1, we have that the state $\mat{V}(t)$ is reduced to a vector $\mat{v}(t)$. The absence of process noise $\mat{w}(t)$, together with the fact that $\mat{A}(t):=\mat{I}$ implies that the Volterra system is time-invariant. The state vector $\mat{v}$ of Volterra kernel coefficients has a length of $(4+1)^4=625$ and the standard Kalman filter is therefore well-suited for the recursive estimation of the kernel coefficients. The mean vector $\mat{m}(0)$ is initialized as the zero vector and the covariance matrix $\mat{P}(0):= 1000\,\mat{I}_{625}$. A standard Kalman filter is then run for 1000 iterations. The median of the runtime over the 1000 iterations of the Kalman filter is 0.0018 seconds. Graph A in Figure \ref{fig:ex1} shows the relative error
\begin{align*}
\frac{||\mat{v}(0)\kpr{4} - \mat{m}(t)||_2}{|| \mat{v}(0)\kpr{4}||_2}
\end{align*}
as a function of the iterations for the standard Kalman filter. When the TN Kalman filter is initialized with the same mean $\mat{m}(0)$ and covariance matrix $\mat{P}(0)$ and no truncated SVD is used in the rounding step, then the exact same graph A is reproduced, which confirms the correctness of the TN Kalman filter. The TN-ranks for the TN representing $\mat{m}$ and $\mat{P}$ converge to $5,15,5$ and $25,226,25$, respectively. The relatively large TN-rank of 226 in the TN of $\mat{P}$ sets the median computation time to $0.1456$ seconds, which is about 80 times slower than a standard Kalman filter. Graphs B,C and D in Figure \ref{fig:ex1} show the relative error for the TN Kalman filter where tolerances of $0.9,0.5$ and $0.1$ are used to truncate the SVD in the rounding step, resulting in a median runtime of 0.0026 seconds per iteration, which is the same order of magnitude as the standard Kalman filter. These tolerances reduce all TN-ranks of $\mat{m}$ and $\mat{P}$ to 1 and come at the cost of slower convergence as demonstrated in Figure \ref{fig:ex1}. Setting the rounding parameter higher than $10^{-1}$ results in this example in a convergence of the TN Kalman filter to estimated Volterra coefficients with a higher relative error.

\begin{figure}[tb]
\begin{center}
\includegraphics[width=.5\textwidth]{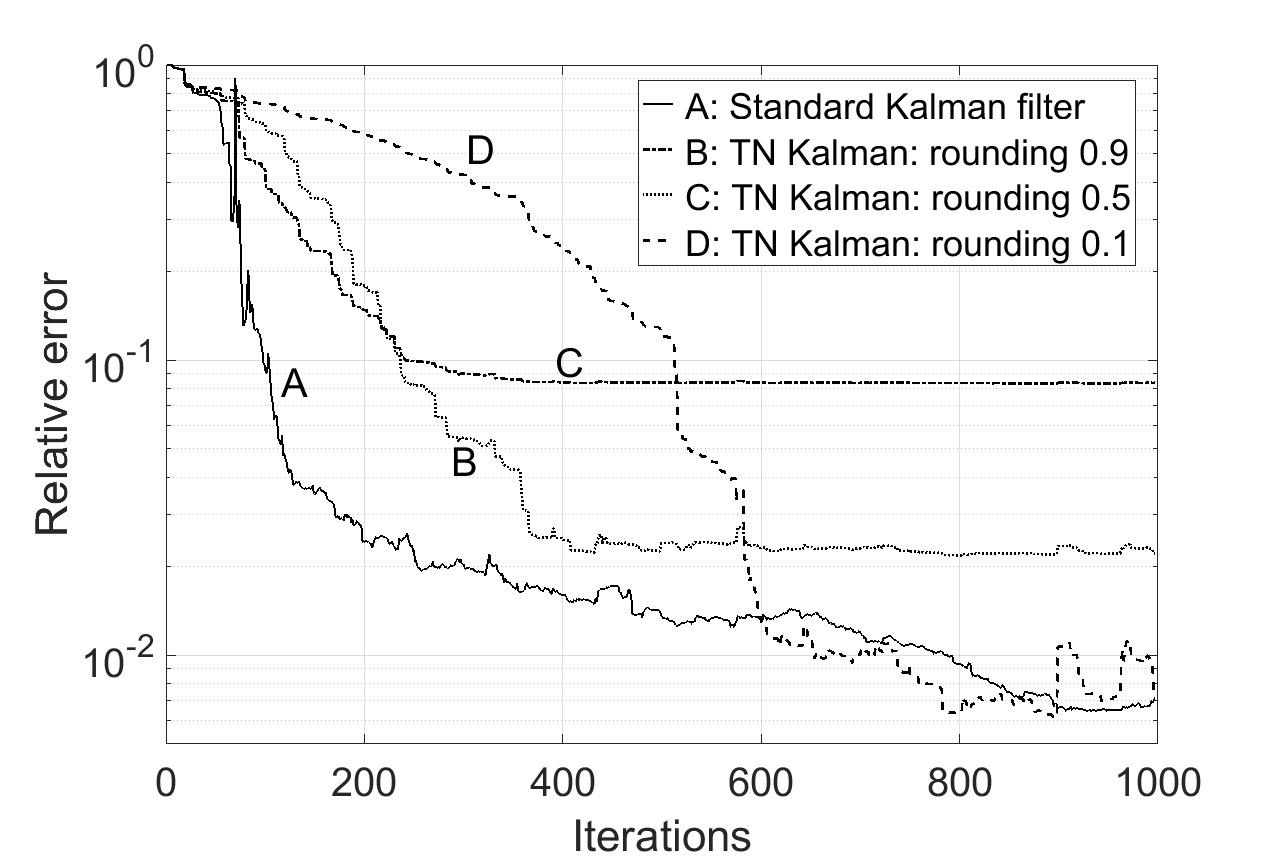}
\caption{Relative errors from both standard and TN Kalman filters for different values of the rounding parameter.}
\label{fig:ex1}
\end{center}
\end{figure}

\subsection{Double balanced mixer}
In this example we consider a double balanced mixer used for upconversion. The output radio-frequency (RF) signal is determined by a 100Hz sine low-frequency (LO) signal and a 300Hz square-wave intermediate-frequency (IF) signal. A phase difference of $\pi/8$ is present between the LO and IF signals. All time series were sampled at 5 kHz for a total of 6000 samples. The output signal is corrupted with Gaussian noise such that three different outputs with respective signal-to-noise ratios (SNRs) of 12 dB, 17 dB and 26 dB are obtained. A TN Kalman filter is then used to estimate a two-input one-output Volterra system with $d=7,M=10$ by filtering the first 5900 samples for the three noisy outputs separately. The state vector containing the Volterra kernel coefficients consists of $21^7\approx 1.801\times 10^{9}$ entries, which is well beyond the reach of a standard Kalman filter.

The initial mean vector is initialized as the zero vector and the initial variance of each of the coefficients is set to 1000. A rounding parameter of $10^{-1}$ is used to keep the TN-ranks small. The maximal TN-ranks of $\ten{M}$ when filtering the output with SNRs of 12 db, 17 dB and 26 dB are $11,13$ and $14$, respectively. All TN-ranks of $\ten{P}$ are equal to one. The median runtime for one TN Kalman filter step is $0.0068$ seconds and the total runtime to filter 5900 samples is about 40 seconds. The obtained mean vectors in the TN format were then used to simulate the remaining 100 samples. The simulated outputs are shown together with the reference output, which is not corrupted by noise, in Figure \ref{fig:ex2}. Figure \ref{fig:ex2} demonstrates that a higher SNR results in better performance of the TN Kalman filter. For the outputs with a SNR of 12 dB, 17 dB and 26 dB, the root-mean-square errors (RMSE) of the simulated outputs are $0.1778,0.097$ and $0.034$, respectively. The reference output, the 12 dB output used in the identification and the simulated output are shown in Figure \ref{fig:ex2b}, where one can see that the simulated output follows the reference more closely.

\begin{figure}[tb]
\begin{center}
\includegraphics[width=.5\textwidth]{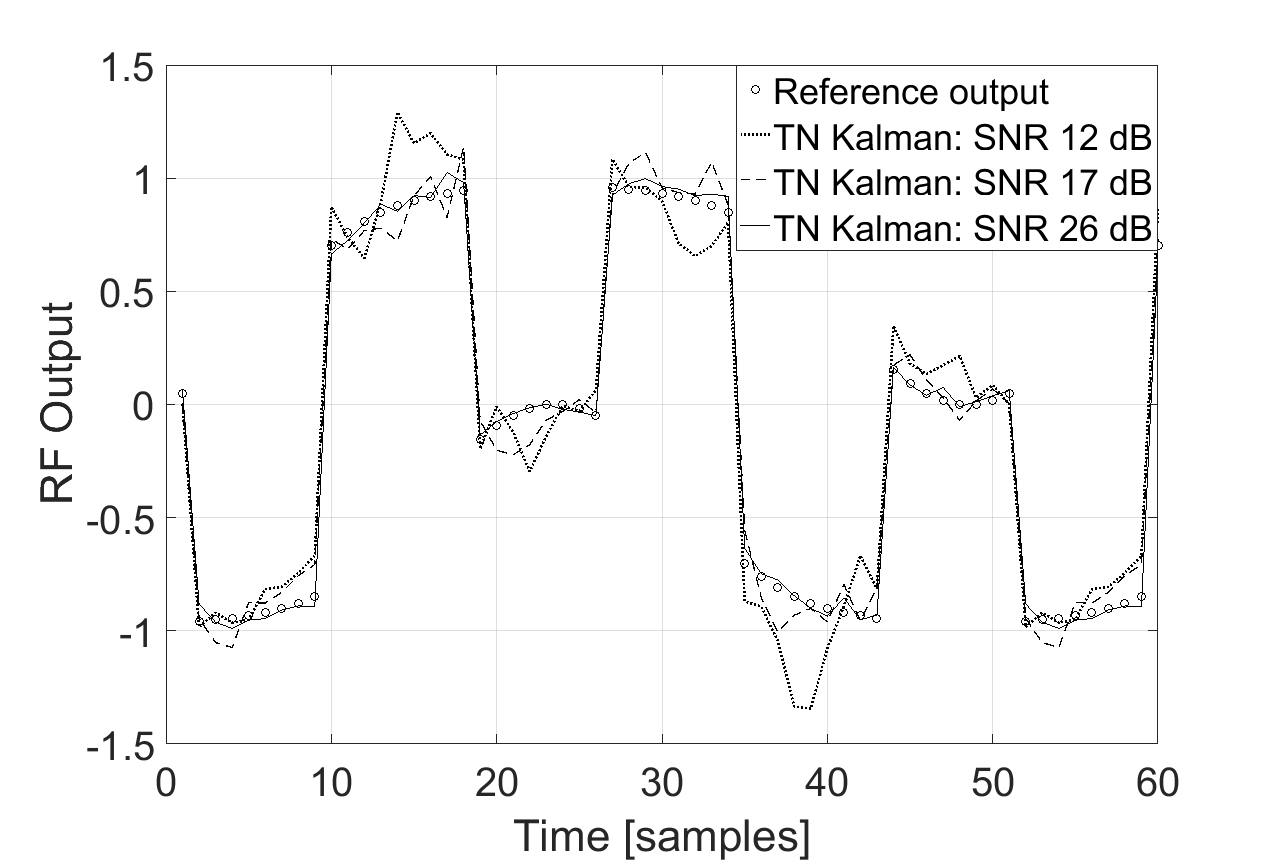}
\caption{Reference and simulated output from the TN Kalman filter for three different SNRs.}
\label{fig:ex2}
\end{center}
\end{figure}
\begin{figure}[tb]
\begin{center}
\includegraphics[width=.5\textwidth]{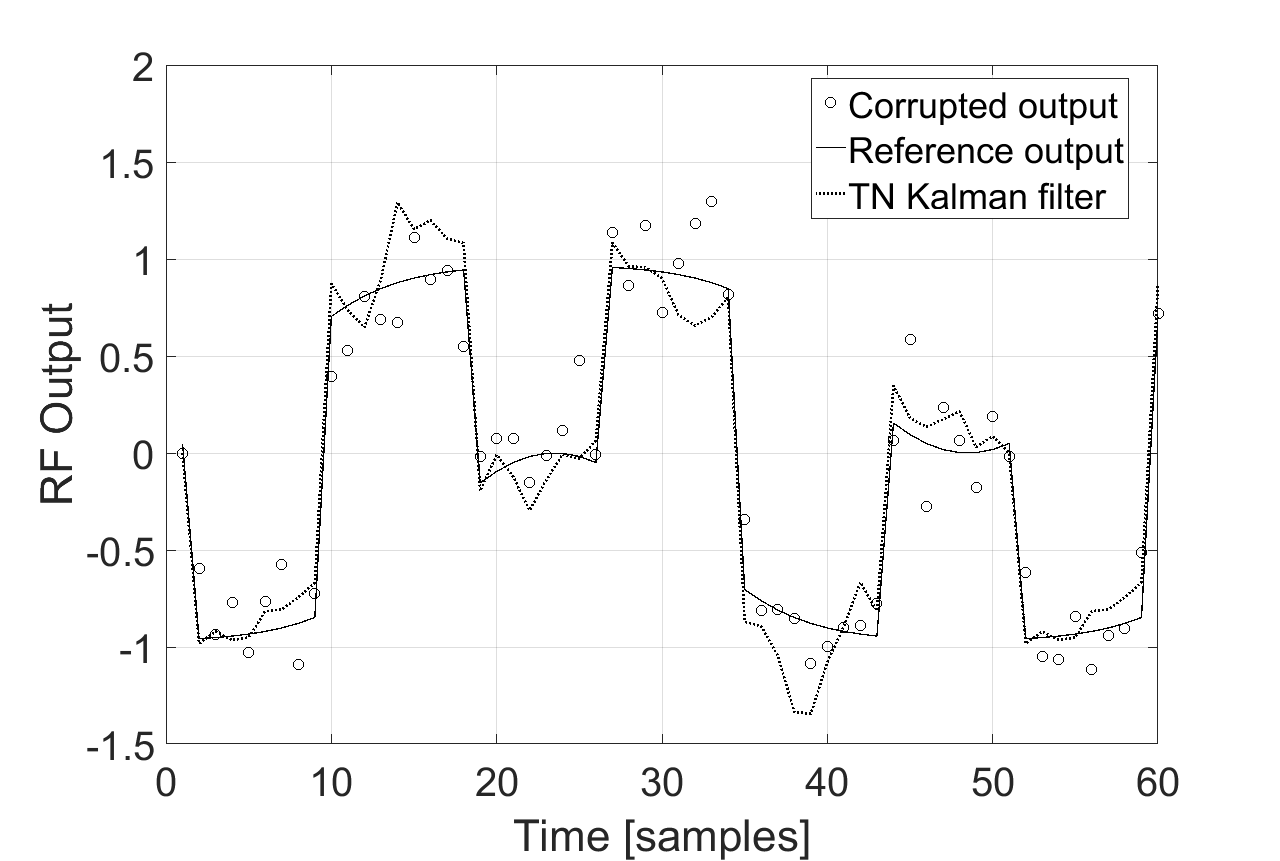}
\caption{Reference, corrupted (SNR of 12 dB) and simulated output.}
\label{fig:ex2b}
\end{center}
\end{figure}

\section{Conclusions}
This article presented a Tensor Network Kalman filter with an application in the recursive identification of high-order discrete-time nonlinear MIMO Volterra systems. Tensor Network theory enables the estimation of exponentially large state vectors, without ever needing to explicitly construct them. This allows the Kalman filter to be applied to previously prohibitive problem scales. The correctness of the Tensor Network Kalman filter and its efficient computation were demonstrated via numerical experiments. Future improvements are the implementation of a square-root TN Kalman filter, together with the extension to matrix outputs. 

\bibliographystyle{plain}        
\bibliography{references}           



\end{document}